\newtheorem{theorem}{Theorem}
\newtheorem{corollary}[theorem]{Corollary}
\newtheorem{conjecture}[theorem]{Conjecture}
\newtheorem{lemma}[theorem]{Lemma}
\newtheorem{example}[theorem]{Example}
\newtheorem{prop}[theorem]{Proposition}
\newtheorem{defn}[theorem]{Definition}
\newtheorem{fact}[theorem]{Fact}
\newenvironment{proof}{\noindent
  \textit{Proof. }\nopagebreak[2]}{$\qed$}
\newcommand{\qed}{\hfill\rule{7pt}{7pt} \medskip}
\newcommand{\eps}{\varepsilon}
\newcommand{\sgn}{\mathrm{sgn}}
\newcommand{\note}[1]{\marginpar{\tiny *note in TeX*}}
\newcommand{\ignore}[1]{}
\newcommand{\cref}[1]{Corollary~\ref{cor:#1}}
\newcommand{\set}[1]{\left\{#1\right\}}
\newcommand{\bits}{\{-1,1\}}
\newcommand{\zbits}{\{0,1\}}
\newcommand{\bn}{\bits^n}
\newcommand{\isafunc}{: \bn \rightarrow \bits}
\newcommand{\R}{\mathbb{R}}
\newcommand{\N}{\mathbb{N}}
\newcommand{\F}{\mathbb{F}}
\newcommand{\Var}{\operatorname{{\bf Var}}}
\newcommand{\E}{\operatorname{{\bf E}}}
\newcommand{\Ex}{\mathop{{\bf E}\/}}
\renewcommand{\Pr}{\operatorname{{\bf Pr}}}
\newcommand{\Prx}{\mathop{{\bf Pr}\/}}
\newcommand{\lsum}{\mathop{\textstyle \sum}}
\newcommand{\lprod}{\mathop{\textstyle \prod}}
\newcommand{\lfrac}{\textstyle \frac}
\newcommand{\poly}{\mathrm{poly}}
\newcommand{\Inf}{\mathrm{Inf}}
\newcommand{\half}{\frac{1}{2}}
\newcommand{\wh}{\widehat}
\newcommand{\la}{\langle}
\newcommand{\ra}{\rangle}
\newcommand{\st}{\,|\,} 
\newcommand{\stc}{\,:\,} 
\newcommand{\inv}[1]{\frac{1}{#1}}
\newcommand{\sumi}{\sum_{i=1}^n} 
\newcommand{\norm}[1]{\|#1\|}
\newcommand{\cals}{\mathcal{S}}
\newcommand{\calh}{\mathcal{H}}
\newcommand{\calu}{\mathcal{U}}
\newcommand{\calg}{\mathcal{G}}
\newcommand{\bfX}{\mathbf{X}}
\newcommand{\bfY}{\mathbf{Y}}
\newcommand{\bfZ}{\mathbf{Z}}
\newcommand{\bfR}{\mathbf{R}}
\newcommand{\bfS}{\mathbf{S}}
\newcommand{\sse}{\subseteq}
\newcommand{\chis}{\chi_S}
\newcommand{\sumalls}{\sum_{S\subseteq [n]}} 
\newcommand{\hatf}{\hat{f}}
\newcommand{\hatfs}{\hat{f}(S)}
\newcommand{\NAE}{\mathsf{NAE}}
\newcommand{\BLR}{\mathsf{BLR}}
\newcommand{\accepts}{\mathrm{accepts}}
\newcommand{\opt}{\mathsf{opt}}
\newcommand{\MAJ}{\mathsf{MAJ}}
\newcommand{\sfand}{\mathsf{AND}}
\newcommand{\sfor}{\mathsf{OR}}
\newcommand{\indi}{{\bf 1}}
\newcommand{\dist}{\mathrm{dist}}
\newcommand{\wt}{\mathrm{wt}} 
\newcommand{\error}{\mathrm{error}}
\newcommand{\bprop}{\begin{prop}}
\newcommand{\eprop}{\end{prop}}
\newcommand{\bcor}{\begin{corollary}} 
\newcommand{\ecor}{\end{corollary}} 
\newcommand{\bthm}{\begin{theorem}}
\newcommand{\ethm}{\end{theorem}}
\newcommand{\blem}{\begin{lemma}}
\newcommand{\elem}{\end{lemma}} 
\newcommand{\bpf}{\begin{proof}}
\newcommand{\epf}{\end{proof}}
\newcommand{\beg}{\begin{example}} 
\newcommand{\eeg}{\end{example}} 
\newcommand{\bdefn}{\begin{defn}}
\newcommand{\edefn}{\end{defn}}
\newcommand{\bit}{\begin{itemize}} 
\newcommand{\eit}{\end{itemize}} 
\newcommand{\ben}{\begin{enumerate}}
\newcommand{\een}{\end{enumerate}} 
\newcommand{\bbox}{\begin{mybox}}
\newcommand{\ebox}{\end{mybox}} 
\newcommand{\bquote}{\begin{quote}}
\newcommand{\equote}{\end{quote}} 
\newenvironment{mybox}
{\begin{center}
\begin{boxedminipage}{5.3in}
\begin{minipage}{5in}
\vspace{6pt}
} 
{\vspace{.1pt}
\vspace{-2pt}
\end{minipage}
\end{boxedminipage}
\end{center}}
\begin{document}
\thispagestyle{empty}
\begin{center}
\vspace*{1.5in}
\Huge{\bf Analysis of Boolean Functions}\\
\vspace{.3in}
\LARGE{Notes from a series of lectures by} \\
\vspace{.1in} 
\huge{Ryan O'Donnell} \\
\vspace{.2in}
\LARGE{Guest lecture by Per Austrin} 
\vspace{1in} \\ 
\LARGE{Barbados Workshop on Computational Complexity} \\
\LARGE{February 26th -- March 4th, 2012} \\ 
\LARGE{Organized by Denis Th\'erien}  \\ 
\LARGE{Scribe notes by Li-Yang Tan} 
\end{center}

 \pagebreak 
\vspace*{\fill}
\setcounter{tocdepth}{2} 
\tableofcontents 
\vspace*{\fill}
\thispagestyle{empty}
\pagebreak 

\newcommand{\bfW}{\mathbf{W}}
\newcommand{\Stab}{\mathrm{Stab}}
\newcommand{\lhalf}{{\lfrac 1 2}}
\newcommand{\GStab}{\mathrm{GStab}}
\newcommand{\RS}{\mathrm{RS}}
\newcommand{\sqdist}{\mathsf{sqdist}}

\section{Linearity testing and Arrow's theorem} 

\begin{center}
\large{Monday, 27th February 2012} 
\end{center}

\bit
\itemsep -.5pt   
\item Open Problem \cite{Guy86, HK92}: Let $a\in\R^n$ with $\norm{a}_2
  = 1$. Prove $\Pr_{x\in\bn}[|\la a,x\ra|\leq 1] \geq
  \half$. 
\item Open Problem (S. Srinivasan): Suppose
  $g:\bn\to\pm\big[\frac{2}{3},1\big]$ where $g(x) \in
  \big[\frac{2}{3},1\big]$ if $\sumi x_i\geq \frac{n}{2}$ and $g(x)
  \in \big[-1,-\frac{2}{3}\big] $ if $\sumi x_i \leq
  -\frac{n}{2}$. Prove $\deg(f) = \Omega(n)$.  \eit\medskip

  In this workshop we will study the analysis of boolean functions and
  its applications to topics such as property testing, voting,
  pseudorandomness, Gaussian geometry and the hardness of
  approximation. Two recurring themes that we will see throughout the
  week are:

\bit
\itemsep -.5pt
\item The noisy hypercube graph is a small set expander. 
\item Every boolean function has a ``junta part'' and a ``Gaussian
  part''. 
\eit 

\subsection{The Fourier expansion}
 
Broadly speaking, the analysis of boolean functions is concerned with
properties of boolean functions $f\isafunc$ viewed as multilinear
polynomials over $\R$.  Consider the majority function over 3
variables $\MAJ_3(x) = \sgn(x_1 + x_2 + x_3)$.  It is easy to check
that $\MAJ_3(x) = \half x_1 + \half x_2 + \half x_3 - \half
x_1x_2x_3$, and this can be derived by summing $2^3 = 8$ polynomials
$p_y:\bits^3 \to\{-1,0,1\}$, one for each $y\in \bits^3$, where
$p_y(x)$ takes value $\MAJ_3(x)$ when $y = x$ and $0$ otherwise.  For
example,
\[ p_{(-1,1,-1)}(x) =
\Big(\frac{1-x_1}{2}\Big)\Big(\frac{1+x_2}{2}\Big)\Big(\frac{1-x_3}{2}
\Big) \cdot \MAJ_3(-1,1,-1). \] Note that the final polynomial that
results from expanding and simplifying the sum of $p_y$'s is indeed
always multilinear ({\it i.e.} no variable $x_i$ occurs squared, or
cubed, {\it etc.})  since $x_i^2 = 1$ for bits $x_i \in \bits$.  The
same interpolation procedure can be carried out for any $f:\bn\to\R$: 

\bthm[Fourier expansion]
\label{bthm:fourierexpansion}
Every $f:\bn\to\R$ can be uniquely expressed as a multilinear
polynomial $\R$, 
\[ f(x) = \sumalls c_S \prod_{i\in S} x_i, \quad \text{where each
  $c_S\in \R$.} \] We will write $\hatfs$ to denote the coefficient
$c_S$ and $\chis(x)$ for the function $\prod_{i \in S} x_i$, and call
$f(x) = \sumalls\hatfs\chis(x)$ the Fourier expansion of $f$.  We
adopt the convention that $\chi_\emptyset \equiv 1$, the identically
$1$ function.  We will write $\deg(f)$ to denote $\max_{S\sse
  [n]}\{|S|\stc \hatfs\neq 0\}$, and call this quantity the Fourier
degree of $f$. \ethm

We will sometimes refer to $\chis(x)\isafunc$ as the ``parity-on-$S$''
function, since it takes value 1 if there are an even number of $-1$
coordinates in $x$ and $-1$ otherwise.  Using the notation of Theorem
\ref{bthm:fourierexpansion}, we have that $\wh{\MAJ_3}(\set{1}) =
\half$, $\wh{\MAJ_3}(\set{1,2,3}) = -\half$, $\wh{\MAJ_3}(\set{1,2})=
0$, and $\deg(\MAJ_3) = 3$. \medskip

We have already seen that every function $f:\bn\to\R$ can be expressed
as a multilinear polynomial over $\R$ (via the interpolation procedure
described for $\MAJ_3$); to complete the proof of Theorem
\ref{bthm:fourierexpansion} it remains to show uniqueness.  Let $V$ be
the vector space of all functions $f:\bn\to\R$.  Here we are viewing
$f$ as a $2^n$-dimensional vector in $\R^{2^n}$, with each coordinate
being the value of $f$ on some input $x\in\bn$; if $f$ is a boolean
function this is simply the truth table of $f$.  Note that the parity
functions $\chis(x)$ are all elements of $V$, and furthermore every
$f\in V$ can be expressed as a linear combination of them ({\it i.e.}
the $\chis(x)$'s are a spanning set for $V$). Since there are $2^n$
parity functions and $\dim(V)= 2^n$, it follows that $\set{\chis\stc
  S\sse [n]}$ is a basis for $V$, and this establishes the uniqueness
of the Fourier expansion.

\bdefn[inner product]
Let $f,g\isafunc$. We define the inner product between $f$ and $g$ as 
\[ \la f,g\ra := \sum_{x\in\bn}\frac{f(x) \cdot g(x)}{2^n} =
\Ex_{x\in\bn}[f(x)g(x)].\] \edefn Note that this is simply the dot
product between $f$ and $g$ viewed as vectors in $\R^{2^n}$,
normalized by a factor of $2^{-n}$.  Given this definition, every
boolean function $f\isafunc$ is a unit vector in $\R^{2^n}$ since $\la
f,f\ra 
= 1$. We will also write $\norm{f}_2^2$ to denote $\la f,f\ra$, and
more generally, $\norm{f}_p := \E[|f(x)|^p]^{1/p}$.

\bthm[orthonormality]
\label{bthm:orthonormality} 
The set of parity functions $\set{\chis(x)\stc S\sse [n]}$ is an
orthonormal basis for $\R^{2^n}$.  That is, for every $S,T\sse [n]$, 
\[ \la \chis,\chi_T\ra = \left\{
\begin{array}{ll}
1 & \text{if $S = T$} \\
0 & \text{otherwise.}  
\end{array}
\right.
\] 
\ethm

\bpf
First note that $\chis\cdot \chi_T = \chi_{S\Delta T}$ since $\prod_{i
\in S} x_i \prod_{j \in T} x_j = \prod_{i \in S\Delta T}x_i \prod_{j
\in S\cap T}x_j^2 = \prod_{i \in S\Delta T} x_i$, where the final
equality uses the fact that $x_i^2 = 1$ for $x_i \in \bits$. Next, we
claim that  
\[ \E[\chi_U] = \left\{
\begin{array}{ll}
1 & \text{if $U = \emptyset$}\\ 
0 & \text{otherwise.}  
\end{array} 
\right. 
\] 
noting that this implies the theorem since $S\Delta T = \emptyset$ iff
$S = T$. Recall that we have defined $\chi_\emptyset$ to be the
identically 1 function, and if $U\neq\emptyset$ then exactly half the
inputs $x \in \bn$ have $\chi_U(x) = 1$ and the other half $\chi_U(x) =
-1$.  \epf

\bprop[Fourier coefficient]
Let $f:\bn\to\R$.  Then $\hatfs = \la f,\chis\ra = \E[f(x)\chis(x)]$. 
\eprop 

\bpf 
To see that this holds, we check that 
\[ \la f,\chis \ra = \Big\la \lsum_{T\sse
  [n]}\hatf(T)\chi_T,\chis\Big\ra = \sum_{T\sse [n]}\hatf(T)\cdot
\la\chi_T,\chis\ra = \hatfs. \] Here we have used the Fourier
expansion of $f$ for the first equality, linearity of the inner
product for the second, and orthonormality of parity functions
(Theorem \ref{bthm:orthonormality}) for the last.\epf

Next we have Plancherel's theorem, which states that the inner product
of $f$ and $g$ is precisely the dot product of their vectors of
Fourier coefficients. 
\begin{theorem}[Plancherel]
Let $f,g:\bn\rightarrow\R$. Then, $\la f,g \ra =  \sumalls
\hat{f}(S)\hat{g}(S).$ 
\end{theorem}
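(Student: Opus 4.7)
The plan is to mimic the computation that appeared just above for the Fourier coefficient formula, but apply it to both arguments of the inner product simultaneously. Since we have already established both the Fourier expansion (\tref{bthm:fourierexpansion}) and the orthonormality of the parity basis (\tref{bthm:orthonormality}), all the tools are in place and no new conceptual ingredient is needed.

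First I would write $f = \sum_{S \subseteq [n]} \hat{f}(S)\chi_S$ and $g = \sum_{T \subseteq [n]} \hat{g}(T)\chi_T$ by invoking \tref{bthm:fourierexpansion}. Then I would substitute these expansions into $\langle f, g\rangle$ and pull the scalar coefficients out of the inner product using bilinearity, obtaining a double sum
\[ \langle f,g \rangle = \sum_{S \subseteq [n]} \sum_{T \subseteq [n]} \hat{f}(S)\hat{g}(T)\, \langle \chi_S, \chi_T \rangle. \]

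Next I would apply the orthonormality statement of \tref{bthm:orthonormality}, which says $\langle \chi_S, \chi_T \rangle$ is $1$ when $S = T$ and $0$ otherwise. This kills every off-diagonal term in the double sum and collapses it to the single sum $\sum_{S \subseteq [n]} \hat{f}(S)\hat{g}(S)$, which is exactly the desired identity.

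There is no real obstacle here: the hard work was already done in proving orthonormality (which required the computation $\chi_S \cdot \chi_T = \chi_{S \Delta T}$ together with the fact that non-empty parities have mean zero). Once orthonormality is in hand, Plancherel is essentially a one-line consequence of bilinearity, exactly analogous to how an orthonormal basis expansion in any finite-dimensional inner product space yields a dot-product formula for the inner product. The only thing to be slightly careful about is being explicit that bilinearity of $\langle \cdot, \cdot \rangle$ follows immediately from linearity of expectation, so that pulling the coefficients $\hat{f}(S)$ and $\hat{g}(T)$ outside is justified.
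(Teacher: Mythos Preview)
Your proposal is correct and follows essentially the same approach as the paper: expand both $f$ and $g$ in the parity basis, use bilinearity of the inner product to get a double sum over $S,T$, and then apply orthonormality to collapse it to the diagonal. The paper's proof is the same one-line computation you describe.
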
 

\begin{proof}
Again we use the Fourier expansions of $f$ and $g$ to check that 
  \[ \la f,g\ra = \Big\la \lsum_{S\sse [n]} \hatfs\chi_S, \lsum_{T\subseteq
    [n]}\hat{g}(T)\chi_T\Big\ra = \sum_S\sum_T \hatfs\hat{g}(T)\cdot
  \la\chi_S,\chi_T\ra = \sumalls \hatfs\hat{g}(S). \] The second
  equality holds by linearity of inner product, and the last by
  orthonormality.
\end{proof}

An important corollary of Plancherel's theorem is Parseval's identity:
if $f:\bn\to\R$, then $\norm{f}_2^2 = \la f,f\ra = \sumalls \hatfs^2$
({\it i.e.} the Fourier transform preserves $L_2$-norm).  In
particular, if $f$ is a boolean function then $\sumalls \hatfs^2 =
\E[f(x)^2] = 1$, which we may view as a probability distribution over
the $2^n$ possible subsets $S$ of $[n]$.  Note that if $f$ and $g$ are
boolean functions then $f(x)g(x) = 1$ iff $f(x) = g(x)$, and so
$\E[f(x)g(x)] = 1-2\cdot \dist(f,g)$, where $\dist(f,g) = \Pr[f(x)\neq
g(x)]$ is the normalized Hamming distance between $f$ and
$g$. \medskip

One of the advantages of analyzing $f$ via its Fourier expansion is
that this polynomial encodes a lot of combinatorial information about
$f$, and these combinatorial quantities can be ``read off'' its
Fourier coefficients easily.  We give two basic examples now. Recall
that for functions $f:\bn\to\R$, the mean of $f$ is $\E[f(x)]$ and its
variance is $\Var(f) := \E[f(x)^2] - \E[f(x)]^2$. Note that if $f$ is
a boolean function then $\E[f(x)]$ measures the bias of $f$ towards
$1$ or $-1$, and $\Var(f) = 4\cdot \Pr[f(x) = 1] \cdot \Pr[f(x) =
-1]$. If $f$ has mean 0 and variance 1 we say that $f$ is balanced, or
unbiased.

\bprop[expectation and variance]
\label{prop:bexandvar}
Let $f:\bn\to\R$.  Then $\E[f(x)] = \hatf(\emptyset)$, and $\Var(f) =
\sum_{S\neq\emptyset}\hatfs^2$. 
\eprop 

\bpf 
For the first equality, we check that $\hatf(\emptyset) =
\E[f(x)\chi_\emptyset(x)] = \E[f(x)]$.  The second equality holds
because 
\[ \Var(f) = \E[f(x)^2] - \E[f(x)]^2 = \Big(\lsum_{S\sse [n]}\hatfs^2\Big) -
\hatf(\emptyset)^2 = \sum_{S\neq\emptyset}\hatfs^2. \] 
Here the second equality uses an application of Parseval's identity. 
\epf 

It is nice to think of $\hatfs^2$ as the ``weight'' of $f$ on $S$,
with the sum of weights of $f$ on all $2^n$ subsets $S$ of $[n]$ being
1 by Parseval's.  Often it will also be convenient to stratify these
weights according to the cardinality of the set $S$.

\bdefn[level weights] Let $f:\bn\to\R$ and $k\in
\set{0,\ldots,n}$. The weight of $f$ at level $k$, or the degree-$k$
weight of $f$, is defined to be $\bfW^k(f) := \sum_{|S|=k}\hatfs^2$.  
 \edefn

For example, in this notation we have $\bfW^0(\MAJ_3) =
\bfW^2(\MAJ_3) = 0$ and $\bfW^1(\MAJ_3) = \bfW^3(\MAJ_3) = 1/2$. 

\subsubsection{Density functions and convolutions}  

So far we have been viewing domain $\bn$ of our functions simply as
strings of bits represented by real numbers $\pm 1$. Often we
would like to be able to ``add'' two inputs, in which case we will
view our functions as $f:\F_2^n\to\R$ instead. The mapping from $\F_2$
to $\bits$ is given by $(-1)^b$, sending $0\in \F_2$ to $1\in \R$ and
$1\in\F_2$ to $-1\in\R$.  We will sometimes also associate a boolean
function $\F_2^n\to\bits$ with its corresponding $\F_2$ polynomial
$\F_2^n\to\F_2$.  For example, the different representations of the
parity function $\chis$ are given in Table
\ref{table:parity}. \medskip 
 
\begin{table}[h]
\renewcommand{\arraystretch}{1.4} 
\begin{center}
\begin{tabular}{|c|c|}
\hline
$\chis(x)\isafunc$ & $x \mapsto \prod_{i \in S} x_i$ \\
\hline
$\chis(x):\F_2^n \to \bits$ & $x \mapsto \prod_{i\in S}(-1)^{x_i}$ \\ 
\hline
$\chis(x) :\F_2^n\to\F_2$ & $x \mapsto \sum_{i\in S} x_i$ \\
\hline
\end{tabular}
\end{center}
\vspace{-0.1in}
\caption{Different representations of $\chis(x)$} 
\label{table:parity}
\end{table}
 
The $\F_2$ degree of a boolean function $f:\F_2^n\to\bits$, denoted
$\deg_{\F_2}(f)$, is its degree as an $\F_2$ polynomial
$\F_2^n\to\F_2$. For example, the parity functions are degree-1 $\F_2$
polynomials; in contrast, recall that $\deg(\chis)$, the Fourier
degree of $\chis$, is $|S|$.  In general we have the inequality
$\deg_{\F_2}(f)\leq \deg(f)$ for all boolean functions $f
:\F_2^n\to\bits$. We remark that unlike Fourier degree, it is not
known how to infer the $\F_2$ degree of a boolean function from its
Fourier expansion. The following useful fact can be easily verified:

\begin{fact}
\label{fact:chihomo}
Let $\chis:\F_2^n\to\bits$. Then $\chis(x+y) = \chis(x)\cdot
\chis(y)$. 
\end{fact}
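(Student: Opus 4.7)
The plan is to verify the identity directly from the explicit formula for $\chi_S$ as a function $\F_2^n \to \bits$ given in Table~\ref{table:parity}, namely $\chi_S(x) = \prod_{i \in S}(-1)^{x_i}$. The key point is that addition in $\F_2^n$ is coordinate-wise, so $(x+y)_i = x_i + y_i$ where the sum on the right is taken in $\F_2$.

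First I would expand the left-hand side using the formula to get $\chi_S(x+y) = \prod_{i \in S}(-1)^{x_i + y_i}$. Then I would use the fact that the map $b \mapsto (-1)^b$ is a group homomorphism from $(\F_2, +)$ to $(\bits, \cdot)$: that is, $(-1)^{a+b} = (-1)^a \cdot (-1)^b$ for $a,b \in \F_2$ (this is a case check on the four possibilities, and it is consistent precisely because $(-1)^2 = 1$, so the exponent is well-defined modulo $2$).

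Applying this homomorphism inside the product gives $\prod_{i \in S}(-1)^{x_i}(-1)^{y_i}$, and splitting the product yields $\bigl(\prod_{i \in S}(-1)^{x_i}\bigr)\bigl(\prod_{i \in S}(-1)^{y_i}\bigr) = \chi_S(x)\chi_S(y)$, which is the desired identity. There is no real obstacle here; the content of the fact is essentially that $(-1)^{(\cdot)}$ converts the additive structure of $\F_2^n$ into the multiplicative structure of $\bits$, and this is exactly why the $\pm 1$ encoding is convenient for Fourier analysis on the hypercube.
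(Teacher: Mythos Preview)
Your proof is correct. The paper does not actually give a proof of this fact; it simply states that it ``can be easily verified,'' and your direct computation from the formula $\chi_S(x) = \prod_{i\in S}(-1)^{x_i}$ is exactly the intended verification.
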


\bdefn[probability density function]
$\varphi: \F_2^n\to\R^{\geq 0}$ is a probability density function of
$\Ex_{x\in\F_2^n}[\varphi(x)] = 1$. 
\edefn

Note that a probability density function $\varphi : \F_2^n\to\R^{\geq
  0}$ corresponds to the probability distribution over $\F_2^n$ where
$\Pr[x] = \varphi(x)\cdot 2^n$.  For example, the constant function
$\varphi \equiv 1$ corresponds to the uniform distribution over
$\F_2^n$. For any $a\in\F_2^n$, the density function
$\varphi_a(x)$ that takes value $2^n$ if $x=a$ and $0$ otherwise
corresponds to the distribution that puts all its weight on a single
point $a\in \F_2^n$. 

\bdefn[convolution]
Let $f,g:\F_2^n\to\R$. The convolution of $f$ and $g$ is the function
$f*g:\F_2^n\to\R$ defined by $(f*g)(x) :=
\E_{y\in\F_2^n}[f(y)g(x+y)]$. 
\edefn

Note that $(f*g)(x) = (g*f)(x)$, since
$(\boldsymbol{y},\boldsymbol{y}+x)$ is just a uniformly random pair of
inputs with distance $x$ and therefore has the same distribution as
$(\boldsymbol{y}+x,\boldsymbol{y})$. Similarly it can be checked that
the convolution operator is commutative: $(f*g)*h = f*(g*h)$.  The
following facts also follow easily from definitions: 

\begin{fact} Let $f:\F_2^n\to\R$ and $\varphi_2, \varphi_2$ be density
  functions.  Then
\ben
\itemsep -.5pt
\item $\la \varphi, f\ra = \Ex_{y\sim\varphi}[f(y)]$. 
\item $(\varphi * f)(x) = \E_{y\sim\varphi}[f(x+y)]$.
\item The density for $z = y_1 + y_2$, where $y_1\sim \varphi_1$ and
  $y_2\sim\varphi_2$, is $\varphi_1 * \varphi_2$. 
\een  
\end{fact}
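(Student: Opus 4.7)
The plan is to prove each of the three facts by direct unwinding of the relevant definitions; none of them require any nontrivial analytic input, so the ``obstacle'' is really just a matter of keeping the normalization conventions straight. The key bookkeeping fact I will use throughout is that the density $\varphi$ on $\F_2^n$ corresponds to the distribution with point mass $\Pr[y] = \varphi(y)/2^n$ at $y$ (so that $\sum_y \Pr[y] = \E[\varphi] = 1$); this is what turns averages over the uniform measure into expectations under $\varphi$.

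For item (1), I would simply expand the inner product against the uniform measure and regroup the factor of $2^{-n}$ with $\varphi$:
\[
\la \varphi, f\ra \;=\; \Ex_{y\in\F_2^n}[\varphi(y)f(y)] \;=\; \sum_{y\in\F_2^n} \frac{\varphi(y)}{2^n}\, f(y) \;=\; \Ex_{y\sim\varphi}[f(y)].
\]
Item (2) then follows immediately: fix $x$ and apply (1) to the function $g_x(y) := f(x+y)$, since by definition $(\varphi * f)(x) = \Ex_{y\in\F_2^n}[\varphi(y) g_x(y)] = \la \varphi, g_x\ra$.

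For item (3), which is the substantive one, I would compute the point probabilities of $z = y_1 + y_2$ directly. By independence,
\[
\Pr[z=x] \;=\; \sum_{y_1+y_2=x} \Pr[y_1]\Pr[y_2] \;=\; \sum_{y\in\F_2^n} \frac{\varphi_1(x+y)}{2^n}\cdot\frac{\varphi_2(y)}{2^n},
\]
where I used $y_1 = x+y$ and renamed $y_2 = y$ (valid in $\F_2^n$ since $-y = y$). Multiplying by $2^n$ to convert from a probability mass to a density value yields
\[
2^n\Pr[z=x] \;=\; \Ex_{y\in\F_2^n}[\varphi_1(x+y)\varphi_2(y)] \;=\; (\varphi_2 * \varphi_1)(x) \;=\; (\varphi_1 * \varphi_2)(x),
\]
using commutativity of convolution noted earlier. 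To finish I would also observe that $\varphi_1 * \varphi_2$ is indeed a density: it is nonnegative since $\varphi_1,\varphi_2 \geq 0$, and $\E[\varphi_1 * \varphi_2] = 1$ either by summing the probabilities above or, equivalently, by applying (1) and (2) to $\la 1, \varphi_1 * \varphi_2\ra$. The only care needed anywhere is tracking the $2^n$ factor separating ``density'' from ``probability mass''; once that is pinned down, all three statements fall out in one or two lines each.
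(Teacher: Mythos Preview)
Your proof is correct and is precisely the direct unwinding of definitions that the paper has in mind; the paper itself does not write out a proof, simply asserting that these facts ``follow easily from definitions.'' Your tracking of the $2^{-n}$ normalization (density versus probability mass) is exactly the point, and the extra check that $\varphi_1 * \varphi_2$ is itself a density is a nice touch that the paper omits.
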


\bthm[Fourier coefficients of convolutions]
\label{thm:bcoeffofconvo}
Let $f,g:\F_2^n\to\R$. Then $\wh{f*g}(S) = \hatfs\cdot \hat{g}(S)$. 
\ethm

\bpf 
We check that 
\begin{eqnarray}
\wh{f*g}(S) \ = \ \Ex_x[(f*g)(x)\chis(x)] 
&=&  \Ex_x\Big[\Ex_y[f(y)g(x+y)]\cdot \chis(x)\Big] \nonumber \\
&=&  \Ex_y\Big[f(y)\cdot \Ex_x[g(x+y)\chis(x)]\Big] \nonumber \\
&=& \Ex_y\Big[f(y)\cdot \Ex_z[g(z)\chis(z+y)]\Big]\label{eq:beq1} \\
&=& \Ex_y\Big[f(y)\cdot
\Ex_z[g(z)\chis(z)\chis(y)]\Big]\label{eq:beq2} \\
&=& \Ex_y[f(y)\chis(y)]\cdot \Ex[g(z)\chis(z)] \ = \ \hatfs\cdot
\hat{g}(S). \nonumber
\end{eqnarray} 
Here (\ref{eq:beq1}) uses the fact that $z-y = z+y$ for
$y,z\in\F_2^n$, and (\ref{eq:beq2}) is an application of Fact
\ref{fact:chihomo}.
\epf

\bthm Let $f,g,h:\F_2^n\to\R$. Then $\la f*g,h\ra = \la f,g*h\ra$.
\ethm \bpf By Theorem \ref{thm:bcoeffofconvo} and Plancherel, both
sides of the identity equal $\sumalls \hatfs\hat{g}(S)\hat{h}(S)$.
\epf

\subsection{Blum-Luby-Rubinfeld}
  
We begin by considering two notions of what it means for a function
$f:\F_2^n\to \F_2$ to be linear.

\bdefn[linear $\#1$] 
\label{def:blin1}
A boolean function $f:\F_2^n\to\F_2$ is linear if $f(x+y) =
f(x)+f(y)$ for all $x,y\in\F_2^n$. 
\edefn

\bdefn[linear $\#2$]
\label{def:blin2} 
A boolean function $f:\F_2^n\to\F_2$ is linear if there exists
$a_1,\ldots,a_n\in\mathbb{F}_2$ such that
$f(x)=a_1x_1+\ldots+a_nx_n$. Equivalently, there exists some $S\sse
[n]$ such that $f(x) = \sum_{i\in S}x_i$.  \edefn

\bprop[$\#1 \Longleftrightarrow \#2$]
\label{prop:linequiv}
These two definitions are equivalent. 
\eprop  

\bpf Suppose $f$ satisfies $f(x+y)=f(x)+f(y)$ for all $x,y\in\F_2^n$.
Let $\alpha_i = f(e_i)\in\F_2$, where $e_i$ is the $i$-th canonical
basis vector for $\mathbb{F}_2^n$.  It follows that $f(x) = f(\sumi
x_ie_i) = \sumi x_if(e_i) = \sumi \alpha_ix_i$, where the second
equality uses Definition \ref{def:blin1} repeatedly, along with the
fact that $f(x_ie_i) = x_if(e_i)$.  For the reverse implication, note
that $f(x+y) = \sum_{i\in S}(x+y)_i = \sum_{i\in S}x_i + \sum_{i\in
  S}y_i = f(x)+f(y)$, where the first and final equalities uses
Definition \ref{def:blin2}.  \epf

It is natural to consider analogous notions for approximate
linearity. 

\bdefn[approximately linear $\#1$]
\label{def:bapproxlin1}
A boolean function $f:\F_2^n\to\F_2$ is approximately linear if
$f(x+y)= f(x)+f(y)$ for most pairs $x,y\in\F_2^n$. 
\edefn

\bdefn[approximately linear $\#2$]
\label{def:bapproxlin2}
A boolean function $f:\F_2^n\to\F_2$ is approximately linear if there
exists some $S\sse [n]$ such that $f(x) = \sum_{i\in S}x_i$ for most
$x\in\F_2^n$. Equivalently, there exists an $S\sse [n]$ such that $f$
is close in Hamming distance to $g(x) = \sum_{i\in S} x_i$. 
\edefn
 
A straightforward generalization of argument given in the proof of
Proposition \ref{prop:linequiv} shows that Definition
\ref{def:bapproxlin2} (approximately linear $\#2$) implies Definition
\ref{def:bapproxlin1} (approximately linear $\#1$). However, the
argument for the reverse implication no longer holds. We will adopt
Definition \ref{def:bapproxlin2} as our notion of approximate
linearity for now, and we will see that the linearity test of Blum,
Luby, and Rubinfeld \cite{BLR93} implies that both definitions are in
fact equivalent.  The Fourier-analytic proof we present here is due to
Bellare {\it et.\,al} \cite{BCH+96}.

\bdefn[BLR linearity test]
Given blackbox access to a function $f:\F_2^n\to \F_2$, 
\ben
\itemsep -.5pt
\item Pick $x,y\in\F_2^n$ independently and uniformly. 
\item Query $f$ on $x,y$ and $x+y$. 
\item Accept iff $f(x) + f(y) = f(x+y)$. 
\een 
\edefn

\bthm[soundness of $\BLR$] 
\label{thm:bblrsoundness}
If $\Pr[\BLR\ \accepts\ f] \geq 1-\eps$ then
$f$ is $\eps$-close to being linear (in the sense of Definition
\ref{def:bapproxlin2}).  
\ethm

\bpf It will be convenient to think of $f$ as $\F_2^n\to\bits$, and so
the acceptance criterion ({\it i.e.} step 3) becomes $f(x) f(y) =
f(x+y)$.  Viewing $f$ this way, now note that 
\begin{eqnarray}
\Pr[\BLR\ \accepts\ f] & = & \Ex_{x,y}\big[\indi(f(x)\cdot f(y) =
f(x+y)\big] \nonumber \\  &=&  \Ex_{x,y}\big[{\lfrac 1 2} + {\lfrac 1
  2}\cdot  f(x)f(y)f(x+y)\big] \nonumber\\ 
&=& {\lfrac 1 2} + {\lfrac 1 2} \Ex_{x}[f(x)\cdot (f*f)(x)] \nonumber \\
&=& {\lfrac 1 2} + {\lfrac 1 2}\sumalls \hatfs\cdot \wh{f*f}(S) \label{eq:bblr1}\\
& =&  {\lfrac 1 2} + {\lfrac 1 2} \sumalls \hatfs^3. \label{eq:bblr2} 
\end{eqnarray} 
Here (\ref{eq:bblr1}) uses Parseval's identity and (\ref{eq:bblr2}) uses
Theorem \ref{thm:bcoeffofconvo}. Therefore, if $\Pr[\BLR\ \accepts\
f]\geq 1-\eps$ then $1-2\eps \leq \sumalls \hatfs^3 \leq
\max\{\hatfs\}\cdot \sumalls \hatfs^2 = \max\{\hatfs\}$, or
equivalently, there exists an $S^*\sse [n]$ such that $\hatf(S^*)\geq
1-2\eps$. Since $\hatf(S^*) = \E[f(x)\chi_{S^*}(x)] = 1-2\cdot
\dist(f,\chi_{S^*})$, we have shown that $\dist(f,\chi_{S^*})\leq
\eps$ and the proof is complete.  
\epf

Theorem \ref{thm:bblrsoundness} says that if $\Pr[\BLR\ \accepts\
f]\geq 1-\eps$ then $f$ is $\eps$-close to some linear function
$\chi_{S^*}$; however, we do not know \emph{which} of the $2^n$
possible linear functions $\chi_{S^*}$ this is. The following theorem
tells us that we can nevertheless obtain the correct value of
$\chi_{S^*}(x)$ with high probability for all $x\in\F_2^n$.

\bthm[local decodability of linear functions] 
Let $f:\F_2^n\to\F_2$ be $\eps$-close to some linear function
$\chi_{S^*}$, and let $x\in\F_2^n$. The following algorithm outputs
$\chi_{S^*}(x)$ with probability at least $1-2\eps$: 
\ben
\itemsep -.5pt
\item Pick $y\in\F_2^n$ uniformly. 
\item Output $f(y) + f(x+y)$. 
\een 
\ethm

\bpf Since $x$ and $x+y$ are both uniform (though not independent),
with probability at least $1-2\eps$ we have $f(y) = \chi_{S^*}(y)$ and
$f(x+y) = \chi_{S^*}(x+y)$.  The claim follows by noting that
$\chi_{S^*}(y) + \chi_{S^*}(x+y) = \chi_{S^*}(y+(x+y)) =
\chi_{S^*}(x)$, where we have used the linearity of $\chi_{S^*}$ along
with the fact that $x+y = x-y$ for $x,y\in\F_2^n$. \epf

\subsection{Voting and influence} 

\bit
\itemsep -.5pt
\item Puzzle: Is it possible for $f\isafunc$ to have exactly $k$
  non-zero Fourier coefficients, for $k = 0,1,2,3,4,5,6,7$? Classify all
  functions with $2$ non-zero Fourier coefficients.
\item Puzzle: Find all $f\isafunc$ with $\bfW^1(f) = 1$. 
\eit

We may think of a boolean function $f\isafunc$ as a voting scheme for
an election with 2 candidates ($\pm 1$) and $n$ voters
($x_1,\ldots,x_n$). Many boolean functions are named after the voting
schemes they correspond to: the $i$-th dictator ${\sf DICT}_i(x) =
x_i$ ({\it i.e.}  ${\sf DICT}_i \equiv \chi_i$); $k$-juntas (functions
that depend only on $k$ of its $n$ variables, where we think of $k$ as
$\ll n$, or even a constant); the majority function $\MAJ(x) =
\sgn(x_1 + \ldots + x_n)$.  The majority function is special instance
of linear threshold functions $f(x) = \sgn(a_0 + a_1x_1 + \ldots +
a_nx_n)$, $a_i \in \R$, also known as weighted-majority functions, or
halfspaces. Another important voting scheme in boolean function
analysis is ${\sf TRIBES}_{w,s}:\bits^{ws}\to\bits$, the $s$-way
$\sfor$ of $w$-way $\sfand$'s of disjoint sets of variables (where we
think of $-1$ as true and $1$ as false). In ${\sf TRIBES}_{w,s}$, the
candidate $-1$ is elected iff at least one member of each of the $s$
disjoint tribes of $w$ members votes for $-1$.\medskip 

The following are a few reasonable properties one may expect of a
voting scheme: 

\bit
\itemsep -.5pt 
\item Monotone: if $x_i\leq y_i$ for all $i\in [n]$ then $f(x)\leq
  f(y)$. 
\item Symmetric: $f(\pi(x)) = f(x)$ for all permutations $\pi\in
  S_n$ and $x\in\bn$. 
\item Transitive-symmetric (weaker than symmetric): for all $i,j \in
  [n]$ there exists a permutation $\pi\in S_n$ such that $\pi(i) = j$
  and $f(x) = f(\pi(x))$ for all $x\in \bn$.  \eit 

  Later in this section (for the proof of Arrow's theorem) we will
  also assume that voters vote independently and uniformly; this is
  known as the impartial culture assumption in social choice theory.

\bdefn[influence] Let $f\isafunc$. We say that variable $i\in [n]$ is
pivotal for $x\in\bn$ if $f(x) \neq f(x^{\oplus i})$, where $x^{\oplus
  i}$ is the string $x$ with its $i$-th bit flipped. The influence of
variable $i$ on $f$, denoted $\Inf_i(f)$, is the fraction of inputs
for which $i$ is pivotal. That is, $\Inf_i(f) := \Pr[f(x) \neq
f(x^{\oplus i})]$.  \edefn

For example, $\Inf_i({\sf DICT}_j)$ is 1 if $i = j$ and 0
otherwise. For the majority function over an odd number $n$ of
variables,  $\Inf_i(\MAJ) = {n-1\choose (n-1)/2}\cdot 2^{-(n-1)}$ since
voter $i$ is pivotal iff the votes are split evenly among the other
$n-1$ voters.  By Stirling's approximation, this quantity is $\sim
\sqrt{2/\pi n} = \Theta(1/\sqrt{n})$.

\bdefn[derivative] Let $f:\bn\to\R$. The $i$-th derivative of $f$ is
the function $(D_if)(x) := \half (f(x^{i\leftarrow 1}) -
f(x^{i\leftarrow -1}))$, where $x^{i\leftarrow b}$ is the string $x$
with its $i$-th bit set to $b$.  \edefn

Note that if $f$ is a boolean function then $(D_if)(x) = \pm 1$ if $i$
is pivotal for $f$ at $x$, and 0 otherwise, and therefore
$\E[(D_if)(x)^2] = \Inf_i(f)$.  We will adopt $\E[(D_if)(x)^2]$ as the
generalized definition of the influence of variable $i$ on $f$ for
real-valued functions $f:\bn\to\R$.

\bthm[Fourier expressions for derivatives and influence]
\label{thm:bderivativefourier}
Let $f:\bn\to\R$. Then 
\ben
\itemsep -.5pt
\item $(D_if)(x) = \sum_{S\ni i} \hatfs\chi_{S\backslash i}(x)$. 
\item $\Inf_i(f) = \sum_{S\ni i} \hatfs^2$. 
\een 
\ethm

\bpf The first identity holds by noting that $(D_i\chis)(x) =
\chi_{S\backslash i}(x)$ if $i\in S$ and $0$ otherwise, along with the
fact that $D_i$ is a linear operator, {\it i.e.}  $D_i(\alpha f + g) =
\alpha (D_if) + (D_ig)$.  The second identity then follows by applying
Parseval to $\Inf_i(f) = \E[(D_if)(x)^2]$. \epf

\bprop[influence of monotone functions]
\label{prop:bmonoinf}
Let $f\isafunc$ be a monotone function. Then $\Inf_i(f) = \hatf(i)$. 
\eprop

\bpf If $f$ is monotone then $(D_if)(x) \in \set{0,1}$ and so
$\Inf_i(f) = \E[(D_if)(x)^2] = \E[(D_if)(x)] = \wh{D_if}(\emptyset) =
\hatf(i)$.  Here the final equality uses the Fourier expansion of
$D_if$ given by Theorem \ref{thm:bderivativefourier}. \epf

An immediate corollary of Proposition \ref{prop:bmonoinf} is that
monotone, transitive-symmetric functions $f$ have $\Inf_i(f) \leq
1/\sqrt{n}$ for all $i\in [n]$. This follows from the fact that
transitive-symmetric functions satisfy $\hat{f}(i) = \hat{f}(j)$ for
all $i,j\in [n]$, along with the bound $\sumi\hat{f}(i)^2 \leq
\sumalls \hatfs^2 = 1$.

\bdefn[total influence]
 Let $f:\bn\to\R$. The total influence of $f$ is $\Inf(f) :=
 \sumi\Inf_i(f)$. 
\edefn

If $f$ is a boolean function, then
\[ \Inf(f) = \sumi\Pr[f(x)\neq f(x^{\oplus i})] = \sumi
\Ex[\indi(f(x)\neq f(x^{\oplus i}))] = \Ex\Big[\lsum_{i=1}^n
\indi(f(x)\neq f(x^{\oplus i}))\Big].\] The quantity $\sumi
\indi(f(x)\neq f(x^{\oplus i}))$ is known as the sensitivity of $f$ at
$x$, and so the total influence of a boolean function is also known as
its average sensitivity.  If $f$ is viewed as a 2-coloring of the
boolean hypercube, the total influence can also be seen to be equal to
$n$ times the fraction of bichromatic edges.\medskip

If $f$ is a monotone boolean function, we see that $\Inf(f) =
\sumi\hatf(i) = \sumi\E[f(x)x_i] = \E[f(x)(x_1+\ldots+x_n)]$. Recall
that if $f$ is boolean then $f(x)(x_1+\ldots+x_n) = 1$ if $f(x) =
\sgn(x_1+\ldots +x_n)$ and $-1$ otherwise. Therefore the total
influence a monotone boolean function, when viewed as a voting scheme,
measures the expected difference between the number of voters whose
vote agrees with the outcome of the election and the number whose vote
disagrees. It is reasonable to expect this quantity to large, and the
next proposition states that (if $n$ is odd) it is maximized by the
majority function. 

\bprop[$\MAJ$ maximizes sum of linear coefficients] 
\label{prop:bmajlin}
Let $n$ be odd.
Among all boolean functions $f\isafunc$, the quantity
$\sumi\hat{f}(i)$ is maximized by $\MAJ(x) =
\sgn(x_1+\ldots+x_n)$. Consequently, if $f$ is monotone then $\Inf(f)
\leq \Inf(\MAJ) \sim \sqrt{2n/\pi}$. 
\eprop

\bpf Note that $\sumi\hat{f}(i) = \E[f(x)(x_1+\ldots+x_n)] \leq
\E[|x_1+\ldots+x_n|]$ since $f$ is $(\pm 1)$-valued, where the
inequality is tight iff $f(x) = \sgn(x_1+\ldots+x_n) = \MAJ(x)$. For
the second claim recall that $\Inf_i(f) = \hat{f}(i)$ if $f$ is
monotone (Proposition \ref{prop:bmonoinf}), and $\Inf_i(\MAJ) \sim
\sqrt{2/\pi n}$ for all $i\in [n]$. \epf

\bprop[Fourier expression for total influence]
Let $f:\bn\to\R$. Then $\Inf_i(f) = \sumalls |S|\cdot \hatfs^2 =
\sum_{k=1}^n k\cdot \bfW^k(f)$. 
\eprop

The proof of this proposition follows immediately from the Fourier
expression for variable influence given by Theorem
\ref{thm:bderivativefourier}. Notice that each Fourier coefficient is
weighted by its cardinality in the sum, and so total influence may
also be viewed as a measure of the ``average degree'' of $f$'s Fourier
expansion.\medskip

Recall that for functions $f:\bn\to\R$ we have $\Var(f) =
\sum_{S\neq\emptyset}\hatfs^2$ (Proposition \ref{prop:bexandvar}), and
comparing this quantity with the Fourier expression for total
influence yields $\Var(f)\leq \Inf(f)$, the Poincar\'e inequality for
the boolean hypercube. The inequality is tight iff $\bfW^1(f) = 1$,
which for boolean functions implies that $f = \pm \mathsf{DICT}_i(f)$
for some $i\in [n]$. If $f$ is a boolean function and $p := \Pr[f(x)
=1]$, it is easy to check that $\Var(p) = 4p(1-p)$, and so the
Poincar\'e inequality can be equivalently stated as $4p(1-p) \leq
\Inf(f) = n\cdot (\text{fraction of bichromatic edges})$.  The
Poincar\'e inequality is therefore an edge-isoperimetric inequality
for the boolean hypercube (where we view boolean functions as
indicators of subsets of $\bn$): for any $p$, it gives a lower bound
on the number of boundary edges between $A$ and $\overline{A}$ where
$A\sse \bn$ has density $p$.  This is a sharp bound when $p = 1/2$,
but not when $p$ is small. For smaller densities we have the bound
$2\alpha\log_2(1/\alpha)\leq \Inf(f)$, where $\alpha := \min\set{\Pr[f(x) =
  1],\Pr[f(x) = -1]}$.  This in turn is sharp whenever $\alpha = 2^k$,
achieved by the {\sf AND} of $k$ coordinates.

\subsection{Noise stability and Arrow's theorem} 
 
Let $\rho\in [0,1]$ and fix $x\in\bn$. Let $N_\rho(x)$ be the
distribution on $\bn$ where $y\sim N_\rho(x)$ if for all $i\in [n]$,
$y_i = x_i$ with probability $\rho$, and $y_i$ is uniformly random $\pm
1$ with probability $1-\rho$.  More generally, for $\rho\in [-1,1]$,
we have that $N_\rho(x)$ is the distribution on strings $y$ where 
\[
y_i = \left\{
\begin{array}{cl}
x_i & \text{with probability $\half + \half\rho$} \\
-x_i & \text{with probability $\half - \half\rho$}.   
\end{array}
\right.
\] 
If $x\sim\bn$ is uniformly random and $y\sim N_\rho(x)$, we say that
$x$ and $y$ are $\rho$-correlated strings; equivalently, $x$ and $y$
are $\rho$-correlated if they are both uniformly random and
$\E[x_iy_i] = \rho$ for all $i\in [n]$.

\bdefn[noise stability]
Let $f:\bn\to\R$ and $\rho \in [-1,1]$. The noise stability of $f$ at
noise rate $\rho$ is 
\[ \Stab_\rho(f) := \E[f(x)f(y)], \quad \text{where $x,y$ are
  $\rho$-correlated strings.}  \] 
\edefn

For example $\Stab_\rho(\pm 1) = 1$, $\Stab_\rho(\mathsf{DICT}_i) =
\rho$, and $\Stab_\rho(\chis) = \rho^{|S|}$.  Tomorrow we will prove
Sheppard's formula: $\lim_{n\to\infty}\Stab_\rho(\MAJ) =
1-\frac{2}{\pi}\arccos(\rho)$. In particular, if $\rho = 1-\delta$ we
have $\Stab_\rho(\MAJ) = \Theta(\sqrt{\delta})$.

\bdefn[noise operator]
Let $\rho\in [-1,1]$. The noise operator $T_\rho$ on functions
$f:\bn\to\R$ acts as follows: $(T_\rho f)(x)  := \E_{y\sim N_\rho(x)}[f(y)]$. 
\edefn 

\bprop[Fourier expressions for noise operator and stability]
\label{prop:bnoise}
Let $\rho\in [-1,1]$ and $f:\bn\to\R$. Then 
\ben
\itemsep -.5pt
\item $(T_\rho f)(x) = \sumalls\rho^{|S|}\hatfs\chis(x)$.  
\item $\Stab_\rho(f) = \sumalls \rho^{|S|}\hatfs^2$. 
\een 
\eprop

\bpf The first identity follows from the linearity of the noise
operator, along with the observation that $(T_\rho\chis)(x) =
\rho^{|S|}\chis(x)$. The second holds by noting that 
\[ \Stab_\rho(f) = \Ex_{(x,y)\,\rho\text{-corr}}[f(x)f(y)] =
\Ex_x[f(x)(T_\rho f)(x)] = \sumalls \hatfs\wh{T_\rho f}(S) = \sumalls
\rho^{|S|}\hatfs^2. \]  \epf

Suppose there is an election with $n$ voters and three candidates: $A,
B$ and $C$. Each voter ranks the candidates by submitting three bits
indicating her preferences: whether the prefers $A$ to $B$ (say, $-1$
if so and $1$ otherwise), and similarly for $B$ versus $C$ and $C$
versus $A$. Clearly a rational voter cannot simultaneously prefer $A$
to $B$, $B$ to $C$ and $C$ to $A$; her ordering of the candidates must
be non-cyclic.
  
\bdefn[rational] A triple $(a,b,c)\in\bits^3$ is rational if not all
three bits are equal (i.e., $(a,b,c)$ defines a total ordering, and is
a valid preference profile). We define the function
$\NAE:\bits^3\rightarrow\{1,0\}$ to be $1$ iff not all three bits are
equal.  \edefn

Now suppose the preferences of the $n$ voters are aggregated into
three $n$-bit strings $x,y$ and $z$, and the aggregate preference of
the electorate is represented by the triple $(f(x),f(y),f(z))$ for
some boolean function $f\isafunc$. Clearly we would like for the the
outcome of the election to be rational; that is, $\NAE(f(x),f(y),f(z))
=1$.

\begin{fact}[Condorcet's paradox \cite{Con85}] 
  With $\MAJ$ as the aggregating function it is possible that all
  voters submit rational preferences and yet the aggregated preference
  string is irrational.
\end{fact}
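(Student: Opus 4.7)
The plan is to prove this by exhibiting a small explicit counterexample: three voters whose individual preference profiles are all rational, but for which applying $\MAJ$ coordinate-wise to the three aggregated strings produces an irrational outcome. Since the statement only asserts existence (``it is possible that''), no general argument is needed—a single well-chosen instance suffices.

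First I would pick the three cyclic rankings of the candidates: voter~1 has $A \succ B \succ C$, voter~2 has $B \succ C \succ A$, and voter~3 has $C \succ A \succ B$. For each voter I translate her ranking into the three bits $(a,b,c) \in \bits^3$ using the stated convention ($-1$ for $A\succ B$, $B\succ C$, $C\succ A$ respectively, and $1$ otherwise), and verify that each of the three resulting triples is in the image of $\NAE^{-1}(1)$, i.e.\ none of them is $(1,1,1)$ or $(-1,-1,-1)$. This is the ``rational input'' check.

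Next I assemble the three columns into the strings $x,y,z \in \bits^3$, where $x_i$ is voter~$i$'s $A$-vs-$B$ bit, $y_i$ her $B$-vs-$C$ bit, and $z_i$ her $C$-vs-$A$ bit. A direct computation then gives $\MAJ(x) = \MAJ(y) = \MAJ(z) = -1$, so $(f(x),f(y),f(z)) = (-1,-1,-1)$, which fails $\NAE$. Concretely, this means the electorate simultaneously prefers $A$ to $B$, $B$ to $C$, and $C$ to $A$—a cycle—while every individual ranking was a valid total order.

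There is no real obstacle here: the only thing to be careful about is the sign convention for the third coordinate (``$C$ versus $A$''), which is easy to get backwards; once the convention is fixed, the $3\times 3$ table of bits is unambiguous and the $\MAJ$ computation is immediate. The example generalizes to any odd $n \geq 3$ by padding with copies of the three voters, but this is unnecessary for the fact as stated.
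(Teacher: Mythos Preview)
Your proposal is correct and takes essentially the same approach as the paper: exhibit an explicit $3$-voter instance and verify by direct computation that each voter's profile is $\NAE$ while the $\MAJ$-aggregated triple is not. The paper's table uses the mirror-image cyclic profile (yielding $(+1,+1,+1)$ rather than your $(-1,-1,-1)$), but this is an immaterial difference.
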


\begin{figure}[h]
\begin{center}
\begin{tabular}{|c|rrr|c|} 
\hline
& $v_1$ & $v_2$ & $v_3$ & $\MAJ$ \\ \hline
$A > B$\,? & +1 & +1 & $-1$ & +1 \\ 
$B > C$\,? & +1 & $-1$ & +1 & +1 \\
$C > A$\,? & $-1$ & +1 & +1 & +1 \\ \hline 
\end{tabular}
\caption{An instance of Condorcet's paradox} 
\end{center} 
\end{figure}
\vspace{-18pt} 
 
\bthm[Arrow's impossibility theorem \cite{Arr50}] Suppose $f$ is an aggregating
function that always produces a rational outcome if all voters vote
rationally. Then $f = \pm{\sf DICT}_i$ for some $i\in [n]$.  If $f$ is
further restricted to be unanimous ({\it i.e.}  $f(1,\ldots,1) = 1$
and $f(-1,\ldots,-1) = -1$; certainly a very reasonable assumption)
then $f$ must be a dictator.  \ethm

The main result of this section is a robust version of Arrow's
impossibility theory due Gil Kalai \cite{Kal02}.  It expresses the
probability that an aggregating function $f$ produces a rational
outcome in terms of the noise stability of $f$, under the impartial
culture assumption (each voter selects an $\NAE$-triple
$(x_i,y_i,z_i)$ uniformly and independently).

\bthm[Kalai] 
\label{thm:bkalai}
$\E[\NAE(f(x),f(y),f(z))] = \frac{3}{4} - \frac{3}{4}\cdot
\Stab_{-1/3}(f)$, where the expectation is taken with respect to the
impartial culture assumption. \ethm

\bpf 
Using the arithmetization $\NAE(a,b,c) = \frac{3}{4} -
\inv{4}(ab+bc+ac)$, we first note that 
\begin{eqnarray*}
\E[\NAE(f(x),f(y),f(z))] &=& {\lfrac 3 4} - {\lfrac 1
  4}\big(\E[f(x)f(y)] + \E[f(y)f(z)] + \E[f(x)f(z)] \big)  \\
&=& {\lfrac 3 4} - {\lfrac 3 4} \E[f(x)f(y)], 
\end{eqnarray*}
where again, all expectations are taken with respect to the impartial
culture assumption. Since $\E[x_iy_i] = -1/3$ if $(x_i,y_i,z_i)$ is an $\NAE$-triple, the
quantity $\E[f(x)f(y)]$ is exactly $\Stab_{-1/3}(f)$ and the proof is
complete.
\epf 
 
Theorem \ref{thm:bkalai} does indeed imply Arrow's
impossibility theorem since  
\[ {\lfrac 3 4} - {\lfrac 3 4} \cdot \Stab_{-1/3}(f) = {\lfrac 3 4} -
{\lfrac 3 4} \sum_{k=0}^n (-{\lfrac 1 3})^k \cdot \bfW^k(f) \leq
{\lfrac 7 9} + {\lfrac 2 9} \cdot \bfW^1(f), \] and so if
$\E[\NAE(f(x),f(y),f(z))] = 1$ then $\bfW^1(f) \geq 1$.  Furthermore
note that the probability of an irrational outcome is at least
$1-\eps$ then $\bfW^1 \geq 1-O(\eps)$.  By a theorem of E. Friedgut,
G. Kalai and A. Naor \cite{FKN02}, if $\bfW^1(f) \geq 1-\eps$ then
$f$ is $O(\eps)$-close to $\pm {\sf DICT}_i$ for some $i\in
[n]$. Therefore Kalai's theorem is in fact a robust version of Arrow's
impossibility theorem: if most rational voter preference profiles
aggregate to a rational outcome, then the aggregating function must be
close to a dictator or anti-dictator.\medskip

We conclude by giving an upper bound on level-1 Fourier weight of
transitive-symmetric functions.  By Theorem \ref{thm:bkalai}, this
gives an upper bound on the probability that such functions aggregate
rational voter preference profiles to a rational outcome.  We will
also prove a generalization of this fact (Proposition
\ref{prop:b2overpi}) using the Berry-Ess\'een theorem tomorrow.
 
\bprop[$\bfW^1(f)$ of transitive-symmetric functions]
Suppose $\hatf(i) = \hatf(j)$ for all $i,j\in [n]$. Then $\bfW^1(f)
\leq {\lfrac 2 \pi} + o_n(1)$. 
\eprop 

\bpf First note that $\sumi\hatf(i)^2 = n\cdot\hatf(1)^2 = n \cdot
\big(\inv{n}\sumi\hatf(i)\big)^2 =
\inv{n}\big(\sumi\hatf(i)\big)^2$. The claim then follows since
we have seen that $\sumi\hatf(i) \leq \sumi\wh{\MAJ}(i) \sim
\sqrt{2n/\pi}$ (Proposition \ref{prop:bmajlin}).  \epf

\pagebreak
\section{Noise stability and small set expansion} 

\begin{center}
\large{Tuesday, 28th February 2012} 
\end{center}

\bit
\item Puzzle: Compute the Fourier expansion of $\MAJ_n$. Hint:
  consider $T_\rho D_i\MAJ(1,\ldots,1)$. 
\eit 

Roughly speaking, the central limit theorem states that if
$X_1,\ldots,X_n$ are independent random variables where none of the
$X_i$'s are ``too dominant'', then $S = \sumi X_i$ is distributed like
a Gaussian as $n\to\infty$.  As a warm-up, we begin by giving another
proof of the fact that $\Inf(\MAJ) \sim \sqrt{2n/\pi}$, this time
using the central limit theorem. First note that
 
\[ \Inf(\MAJ) = \E\Big[\MAJ(x)\lsum_{i=1}^n x_i\Big] = \E\Big[\Big|\lsum_{i=1}^n
x_i\Big|\Big] = \sqrt{n}\cdot \E\Big[\Big|\lsum_{i=1}^n
\frac{x_i}{\sqrt{n}}\Big|\Big] \] 

Now by the central limit theorem we know that $\inv{\sqrt{n}}\sumi x_i
\to \calg\sim N(0,1)$ as $n\to\infty$, and since $\E[|\calg|] =
\sqrt{2/\pi}$, we conclude that $\Inf(\MAJ) \sim \sqrt{2/\pi}\cdot
\sqrt{n}$.

\bdefn[reasonable r.v.]  Let $B\geq 1$. We say that a random variable
$X$ is $B$-reasonable if $\E[X^4] \leq B\cdot
\E[X^2]^2$. Equivalently, $\norm{X}_4 \leq B^{1/4}\cdot \norm{X}_2$.
\edefn

For example, a uniformly random $\pm 1$ bit ({\it i.e.} a Rademacher
random variable) is $1$-reasonable, and a standard Gaussian is
3-reasonable. The Berry-Ess\'een theorem \cite{Ber41, Ess42} is a
finitary version of the central limit theorem, giving explicit bounds
on the rate at which reasonable random variables converge towards the
Gaussian distribution.  

\bthm[Berry-Ess\'een] Let $X_1,\ldots,X_n$ be independent,
$B$-reasonable random variables satisfying $\E[X_i] =0$. Let
$\sigma_i^2 := \E[X_i^2]$ and suppose $\sumi\sigma_i^2 =1$. Let $S =
X_1 + \ldots + X_n$ and $\calg\sim N(0,1)$. For all $t\in \R$,
\[ |\Pr[S\leq t] - \Pr[\calg \leq t]| \leq
  O(\eps)\]
where $\eps = \big(B\cdot \sumi \sigma_i^4\big)^{1/2} \leq
\sqrt{B}\cdot \max\set{|\sigma_i|}$. 
\ethm

We prove the Berry-Ess\'een theorem with a weaker bound of $\eps = \big(B\cdot
\sumi\sigma_i^4\big)^{1/5}$ in Section \ref{sec:bbe}.

\subsection{Sheppard's formula and $\Stab_\rho(\MAJ)$}
\label{sec:sheppard}
 
\bdefn[$\rho$-correlated Gaussians] 
\label{def:bcorrgauss}
Let $\calg,\calg'\sim N(0,1)$ be
independent standard Gaussians. Set $\calh = (\calg,\calg')\cdot
(\rho,\sqrt{1-\rho^2}) := \rho\cdot \calg + \sqrt{1-\rho^2}\cdot
\calg'$. Then $\calg$ and $\calh$ are $\rho$-correlated
Gaussians. Note that if $\calg$ and $\calh$ are $\rho$-correlated
Gaussians then $\E[\calg\calh] = \rho \cdot \E[\calg^2] +
\sqrt{1-\rho^2}\cdot \E[\calg]\E[\calg'] = \rho$.\edefn

\bthm[\cite{She99}]
Let $\calg$ and $\calh$ be $\rho$-correlated Gaussians. Then
$\Pr[\sgn(\calg)\neq \sgn(\calh)] = \arccos(\rho)/\pi$. 
\ethm 

\bpf First recall that $\sgn(\vec{u}\cdot \vec{v})$ is determined by
which side of the halfspace normal to $\vec{u}$ the vector $\vec{v}$
falls on. Since $\calh = (\rho,\sqrt{1-\rho^2})\cdot (\calg, \calg')$,
and $\calg = (1,0)\cdot (\calg,\calg')$, it follows that
$\Pr[\sgn(\calg)\neq\sgn(\calh)]$ is precisely the probability that
the halfspace normal to $(\calg,\calg')$ splits the vectors
$(\rho,\sqrt{1-\rho^2})$ and $(1,0)$. Therefore, 
\[ \Pr[\sgn(\calg)\neq\sgn(\calh)] = {\lfrac 1 \pi} \cdot (\text{angle between
    $(\rho,\sqrt{1-\rho^2})$ and $(1,0)$}) =
{\lfrac 1 \pi} \arccos(\rho).\] \epf

Next, we use Sheppard's formula to prove that $\Stab_\rho(\MAJ)\to 1-
\frac{2}{\pi}\arccos(\rho)$ as $n\to\infty$.  First recall that
\[ \Stab_\rho(\MAJ) = \Ex_{(x,y)\,\rho\text{-corr}}[\MAJ(x)\MAJ(y)] =
1-2\Pr[\MAJ(x)\neq \MAJ(y)],\] and so it suffices to argue that
$\Pr[\MAJ(x)\neq \MAJ(y)]\to {\lfrac 1 \pi}\arccos(\rho)$. Next, we view
\[\MAJ(x) = \sgn\Big(\frac{x_1+\ldots+x_n}{\sqrt{n}}\Big),\quad
\MAJ(y) = \sgn\Big(\frac{x_1 + \ldots + x_n}{\sqrt{n}}\Big), \]
and note that  
\[
\E\bigg[\Big(\lsum_{i=1}^n\frac{x_i}{\sqrt{n}}\Big)
\cdot\Big(\lsum_{i=1}^n\frac{y_i}{\sqrt{n}}\Big)\bigg] =
\E\bigg[{\lfrac 1 n}\lsum_{i=1}^n x_i\cdot \lsum_{i=1}^n y_i\bigg] =
{\lfrac 1 n}\sumi \E[x_i y_i] = \rho. \] While the standard central limit
theorem tells us that $\vec{X} = (x_1+\ldots+x_n)/\sqrt{n}$ and
$\vec{Y} = (y_1+\ldots + y_n)/\sqrt{n}$ each individually converges
towards the standard Gaussian $\calg\sim N(0,1)$, the two-dimensional
central limit theorem states that $(\vec{X},\vec{Y})$ actually
converge to $\rho$-correlated Gaussians $(\calg,\calh)$ as $n \to
\infty$.  In fact, the two-dimensional Berry-Ess\'een theorem
quantifies this rate of convergence, bounding the error by $\pm
O(1/\sqrt{n})$ as long as $\rho$ is bounded away from $\pm
1$. Combining this with Sheppard's formula, we conclude that
\[ \Pr[\MAJ(x)\neq \MAJ(y)] \to\Pr[\sgn(\calg)\neq\sgn(\calh)] =
{\lfrac 1 \pi} \arccos(\rho) \quad \text{as $n\to\infty$}. \] 
Since 
\[ \Stab_\rho(\MAJ)\to 1 - {\lfrac 2 \pi}\arccos(\rho) = {\lfrac 2 \pi}
+ {\lfrac 3 \pi}\rho^3 + \ldots + \tbinom{k-1}{(k-1)/2} {\lfrac
  4 {\pi k 2^k}}\cdot \rho^k + \ldots\]
and $\Stab_\rho(\MAJ) = \sum_{k=0}^n \rho^k\cdot \bfW^k(\MAJ)$, we have
that $\bfW^k(\MAJ)\to \tbinom{k-1}{(k-1)/2} {\lfrac
  4 {\pi k 2^k}}$ as $n\to\infty$. 

\subsection{The noisy hypercube graph}

Let $\rho\in [-1,1]$. The $\rho$-noisy hypercube graph is a complete
weighted graph on the vertex set $\bn$, where the weight on an edge
$(x,y)$ is the probability of getting $x$ and $y$ when drawing
$\rho$-correlated strings from $\bn$.  Equivalently, $\wt(x,y) =
\Pr[x\leftarrow \calu]\Pr[y\leftarrow N_\rho(x)] = 2^{-n}\cdot
(\half-\half\rho)^{\Delta(x,y)}(\half + \half \rho)^{n-\Delta(x,y)}$,
and the sum of weights of all edges incident to any $x\in\bn$ is
exactly $2^{-n}$. \medskip

Recall that if $f$ is a boolean function, then $\Stab_\rho(f) =
\E_{(x,y)\,\rho\text{-corr}}[f(x)f(y)] = 1-2\Pr[f(x)\neq f(y)]$, or
equivalently, $\Prx_{(x,y)\,\rho\text{-corr}}[f(x)\neq f(y)] = {\lfrac
  1 2} - {\lfrac 1 2}\cdot \Stab_\rho(f)$.  Viewing $f$ as the
indicator of a subset $A_f\sse \bn$, the quantity $\Pr[f(x)\neq f(y)]$
is the sum of weights of edges going from $A_f$ to its complement
$\overline{A_f}$. Therefore, roughly speaking, a function $f$ is noise
stable iff the sum of weights of edges contained within $A_f$ and
$\overline{A_f}$ is large. On Friday we will prove the Majority Is
Stablest theorem due to E. Mossel, R. O'Donnell, and K. Olezkiewicz
\cite{MOO10}: \smallskip

\begin{quote} {\bf Majority Is Stablest.} Fix a constant $0 < \rho < 1$,
  and let $f\isafunc$ be a balanced function.  It is easy to see that
  $\Stab_\rho(f)$ is maximized when $\bfW^1(f) = 1$, in which case $f
  = \pm {\sf DICT}_i$ and $\Stab_\rho({\sf DICT}_i) = \rho$. However,
  if we additionally require that $\Inf_i(f) \leq \tau$ for all $i\in
  [n]$, then the theorem states that $\Stab_\rho(f) \leq
  \Stab_\rho(\MAJ) + o_\tau(1)$; {\it i.e.} the set $A$ of density
  $1/2$ with all influences small that maximizes sum of weights of
  edges within $A$ and $\overline{A}$ is $A_\MAJ$.
\end{quote}\smallskip

Let $A\sse\bn$ and $\indi_A(x):\bn\to\set{0,1}$ be its indicator
function, and let $\alpha := \E[\indi_A(x)] = |A|\cdot 2^{-n}$ denote
its density.  Recall that $\Stab_\rho(\indi_A) =
\E[\indi_A(x)\indi_A(y)] = \Pr[x\in A\ \&\ y\in A]$, or equivalently,
$\Pr[y\in A\st x\in A] = \inv{\alpha}\cdot \Stab_\rho(\indi_A)$, and
so $\Stab_\rho(\indi_A)$ is the probability that a random walk (where
each coordinate is independently flipped with probability
$(\half-\half\rho)$) starting at a point $x\in A$ remains in $A$,
normalized by the density of $A$.  Tomorrow we will prove (a specific
instance of) the small set expansion theorem:\smallskip

\begin{quote} {\bf Small Set Expansion.} $\Stab_\rho(\indi_A) \leq
  \alpha^{2/(1+\rho)}$, or equivalently, $\Pr[y\in A\st x\in A] \leq
  \alpha^{(1-\rho)/(1+\rho)}$.  In particular, if $\alpha$ is small,
  the probability that a random walk starting in $A$ landing outside
  $A$ is very high.
\end{quote} \smallskip

Recall that $\Stab_\rho(\indi_A) = \bfW^0(\indi_A) + \rho\cdot
\bfW^1(\indi_A) + \rho^2\cdot \bfW^2(\indi_A) + \ldots$, and so
$\frac{d}{d\rho}\Stab_\rho(\indi_A)\big|_{\rho = 0} =
\bfW^1(\indi_A)$. As a direct corollary of the small set expansion
theorem we have the following bound on $\bfW^1(\indi_A)$: 
\[ \bfW^1(\indi_A) = {\lfrac d {d\rho}}\Stab_\rho(\indi_A)\Big|_{\rho
  = 0} \leq {\lfrac d {d\rho}} \alpha^{2/(1+\rho)}\Big|_{\rho = 0} =
2\alpha^2\ln(1/\alpha). \] We now give a self-contained proof this
fact, due to Talagrand \cite{Tal96}:
  
\bthm[level-1 inequality]
\label{thm:btallem}
Let $f:\bn\to\zbits$ and $\alpha=\E[f]$. Then $\mathcal{W}_1(f) = O(\alpha^2
\ln(1/\alpha))$. 
\ethm

\bpf First consider an arbitrary linear form $\ell(x) = \sumi a_ix_i$
normalized to satisfy $\sum a_i^2 = 1$.  For any $t_0\geq 1$, we
partition $x \in \bn$ according to whether $|\ell(x)| < t_0$ or
$|\ell(x)|\geq t_0$, and note that
\[ \E[f(x) \ell(x)] = \E\big[\indi(|\ell(x)|<t_0)\cdot f(x) \ell(x)\big] +
\E\big[\indi(|\ell(x)|\geq t_0)\cdot f(x) \ell(x)\big]. \] The first summand
is at most $\alpha\cdot t_0$, and the second is at
most
\[ \int_{t_0}^\infty 2e^{-t^2/2} \ dt \leq 2 \int_{t_0}^\infty
te^{-t^2/2} \ dt = \left[-2e^{-t^2/2}\right]_{t_0}^\infty = 2\cdot
e^{-t_0^2/2}\] by Hoeffding, where the inequality holds since $t_0\geq
1$. Choosing $t_0 = (2\ln(1/\alpha))^{1/2}\geq 1$, we get
\begin{equation}
 \E[f(x) \ell(x)] \leq O(\alpha
 \sqrt{\ln(1/\alpha)}). \label{eq:btallem}
\end{equation}
Now let $\ell(x) = \inv{\sigma}\sumi \hatf(i)x_i$ where $\sigma =
\sqrt{\bfW_1(f)}$ (if $\sigma=0$ we are done). Note that 
\[ \E[f(x) \ell(x)] = \sum_{i=1}^n \hatf(i)\hat{\ell}(i) = 
{\lfrac 1 \sigma} \sumi\hatf(i)^2 = \sqrt{\bfW_1(f)}.\] 

The claimed inequality then follows by applying (\ref{eq:btallem}). 
\epf

\bprop[$\frac{2}{\pi}$ theorem]
\label{prop:b2overpi}
Let $f\isafunc$ with $|\hatf(i)| \leq \eps$ for all $i\in [n]$. Then
$\bfW^1(f) \leq \frac{2}{\pi} + O(\eps)$. 
\eprop 

\bpf Let $\sigma = \sqrt{\bfW^1(f)}$ and assume without loss of
generality that $\sigma\geq 1/2$ (otherwise $\bfW^1(f) < 1/4 <
\frac{2}{\pi}$ and we are done).  Let $\ell(x) = \inv{\sigma}\sumi
\hatf(i)x_i$ where $|\hat{\ell}(i)| \leq 2\eps$ for all $i\in [n]$.
Note that $\E[f(x)\ell(x)] = \sigma$ and $\E[\ell(x)f(x)] \leq
\E[|\ell(x)|]$. Applying Berry-Ess\'een gives us $\E[|\ell(x)|]
\approx_{O(\eps)} \E[|\calg|] = \sqrt{2/\pi}$ and this completes the
proof (technically Berry-Ess\'een only yields a bound on closeness in
cdf-distance, but this can be translated into a bound on closeness in
first moments). \epf

A dual to Proposition \ref{prop:b2overpi} holds as well: if
$\bfW^1(f) \geq \frac{2}{\pi}-\eps$, then $f$ is
$O(\sqrt{\eps})$-close to a linear threshold function (in fact the LTF
is simply $\sgn\big(\sumi \hatf(i)x_i\big)$).  This is the crux of the
result that the class of linear threshold functions is testable with
$\poly(1/\eps)$-queries \cite{MORS10}.

\subsection{Bonami's lemma} 
 
The next theorem, due to Bonami \cite{Bon70}, states that low degree
multilinear polynomials of Rademachers are reasonable random variables
(this is sometimes known as $(4,2)$-hypercontractivity).

\bthm[Bonami]
\label{thm:bbonami} 
Let $f:\bn\rightarrow\R$ be a multilinear polynomial of degree at most
$d$. Then $\|f\|_4\leq \sqrt{3}^d \|f\|_2$. Equivalently, $\E[f(x)^4]
\leq 9^d \cdot \E[f(x)^2]^2$ \ethm 

\bpf We proceed by induction on $n$.  If $n=0$ then $f$ is the
constant and the inequality holds trivially for all $d$.  For the
inductive step, let
\[ f(x_1,\ldots,x_n) = g(x_1,\ldots,x_{n-1}) +
x_nh(x_1,\ldots,x_{n-1}) \] Notice that $g$ has degree at most $d$,
$h$ has degree at most $d-1$, and both are polynomials in $n-1$
variables. Notice also that the random variable $x_n$ is independent
of both $g$ and $h$. Therefore, we have: 
\begin{eqnarray*}
\E[f^4] &=& \E[(g+x_nh)^4] \\
&=& \E[g^4] + 3\E[x_n]\E[g^3h] + 6\E[x_n^2]\E[g^2h^2] +
3\E[x_n^3]\E[gh^3] + \E[x_n^4]\E[h^4] \end{eqnarray*} where we used
independence for the second equality. Now note that
$\E[x_n]=\E[x_n^3]=0$, $\E[x_n^2]=\E[x_n^4]=1$, and $\E[g^2h^2]
\leq \sqrt{\E[g^4]}\sqrt{\E[h^4]}$ by Cauchy-Schwarz. Therefore, 
\begin{eqnarray*}
\E[f^4] &\leq& \E[g^4] + 6\sqrt{\E[g^4]}\sqrt{\E[h^4]} + \E[h^4] \\
&\leq & 9^d \E[g^2]^2 + 6\sqrt{9^d\E[g^2]^2}\sqrt{9^{d-1}\E[h^2]^2} +
9^{d-1}\E[h^2]^2 \\
&=& 9^d\cdot (\E[g^2]^2 + 2\E[g^2]\E[h^2] + \lfrac{1}{9} \E[h^2]^2) \\
&\leq & 9^d\cdot  (\E[g^2]+\E[h^2])^2 
\end{eqnarray*}
To complete the proof, notice that: 
\begin{eqnarray*}
\E[f^2] &=& \E[(g+x_n h)^2] \\
&=& \E[g^2] + 2\E[x_n]\E[gh] + \E[x_n^2]\E[h^2] \\
&=& \E[g^2] + \E[h^2] 
\end{eqnarray*}
and so we have shown that $\E[f^4]\leq 9^d \cdot \E[f^2]^2$. 
\epf 
 
\pagebreak 
 
\section{KKL and quasirandomness}

\begin{center}
\large{Wednesday, 29th February 2012} 
\end{center}

\bit
\item Open Problem: Prove that among all functions $f\isafunc$ with
  $\deg(f)\leq d$, the quantity $\sumi\hat{f}(i)$ is maximized by
  $\MAJ_d$. Less ambitiously, show $\sumi \hatf(i)= O(\sqrt{\deg(f)})$. 
\eit

\subsection{Small set expansion} 
 
We begin by proving the $\rho = 1/3$ case of the small set expansion
theorem: let $A\sse \bn$ be a set of density $\alpha = |A|\cdot 2^{-n}$, and
$\indi_A:\bn\to\zbits$ be its indicator function. We will need the
following variant of Bonami's lemma; its proof is identical to that of
Theorem \ref{thm:bbonami}.

\bthm[Bonami'] 
\label{thm:bbonamip}
Let $f:\bn\to\R$. Then $\norm{T_{1/\sqrt{3}}f}_4 \leq
\norm{f}_2$. \ethm 

Theorem \ref{thm:bbonamip} is a special case of the hypercontractivity
inequality \cite{Bon70, Gro75, Bec75}: if $1\leq p \leq q \leq \infty$
and $\rho \leq \sqrt{\frac{p-1}{q-1}}$, then $\norm{T_\rho f}_q \leq
\norm{f}_p$.

\bthm[SSE for $\rho = 1/3$]
\label{thm:bsse}
Let $A\sse\bn$. Then
$\Stab_{1/3}(\indi_A) \leq \alpha^{3/2}$, where $\alpha$ is the
density of $A$. 
\ethm

\bpf We will need a corollary of Theorem \ref{thm:bbonamip} that will
also be useful for us when proving the KKL theorem in the next
section: $\norm{T_{1\sqrt{3}}f}_2^2 \leq \norm{f}_{4/3}^2$.  To see
that this holds, we check that
\begin{eqnarray}
\norm{T_{1/\sqrt{3}}f}_2^2 &=& \E[f(x) (T_{1/3}f)(x)] \nonumber \\ & \leq  &
\norm{f}_{4/3}\cdot \norm{T_{1/3}f}_4 \label{eq:bsse1} \\
& = & \norm{f}_{4/3}\cdot \norm{T_{1/\sqrt{3}}\,(T_{1/\sqrt{3}}f)}_4
\nonumber \\ 
&\leq & \norm{f}_{4/3}\cdot \norm{T_{1/\sqrt{3}}f}_2. \label{eq:bsse2} 
\end{eqnarray}
Here (\ref{eq:bsse1}) is by H\"older's inequality and (\ref{eq:bsse2})
by applying Theorem \ref{thm:bbonamip} to $T_{1/\sqrt{3}}f$; dividing
both sides by $\norm{T_{1/\sqrt{3}}f}_2$ yields the claim. Applying
this corollary to $f = \indi_A$ completes the proof:
\[ \Stab_{1/3}(\indi_A) = \sumalls \big({\lfrac 1
  3}\big)^{|S|}\hatfs^2 = \big\|T_{1/\sqrt{3}}\indi_A\big\|_2^2 \leq
\E\big[\indi_A(x)^{4/3}\big]^{3/2} = \alpha^{3/2}.  \] 
Here the second equality is an application of Parseval's, and the
final uses the fact that $\indi_A$ is $\set{0,1}$-valued. 
\epf

\subsection{Kahn-Kalai-Linial}
 
\bthm[\cite{KKL98}]
Let $f\isafunc$ with $\E[f] = 0$, and set $\alpha := \max_{i\in
  [n]}\set{\Inf_i(f)}$. Then $\Inf(f) = \Omega(\log(1/\alpha))$. 
\ethm

\bpf Recall that variable $i$ is pivotal for $x$ in $f$ iff $(D_if)(x)
= \pm 1$, and so $\E[|D_if|] = \Inf_i(f) \leq \alpha$; the plan is to
apply the small set expansion theorem to $D_if$ and sum over all $i\in
[n]$.  We have shown in the proof of Theorem \ref{thm:bsse} that
$\norm{T_{1/\sqrt{3}}g}_2^2 \leq \norm{g}_{4/3}^2$, and applying it to
$g = D_if$ gives us
\begin{equation} \Stab_{1/3}(D_if) \leq \norm{D_if}_{4/3}^2 =
  \E\big[|D_if|^{4/3}\big]^{3/2} =
  \Inf_i(f)^{3/2}. \label{eq:bsse3} \end{equation} We sum both sides
of the inequality over $i\in [n]$, starting with the left hand
side. Recall that $\Stab_{1/3}(D_if) = \sumalls \big({\lfrac 1
  3}\big)^{|S|} \wh{D_if}(S)^2 = \sum_{S\ni i}\big({\lfrac 1
  3}\big)^{|S|-1}\hatfs^2$, and so:
\begin{eqnarray}
\sumi\Stab_{1/3}(D_if)&=& \ \sum_{i=1}^n \sum_{S\ni i} \big({\lfrac 1
  3}\big)^{|S|-1}\hatfs^2 \nonumber \\
&=& \sum_{|S|\geq 1} |S|\,  \big({\lfrac 1 3}\big)^{|S|-1}\hatfs^2
\nonumber \\
&\geq & \sum_{1\leq |S| \leq 2\,\Inf(f)} |S|\,  \big({\lfrac 1
  3}\big)^{|S|-1}\hatfs^2 \nonumber \\
&\geq& \sum_{1\leq |S| \leq 2\,\Inf(f)} 2\,\Inf(f)\cdot  \big({\lfrac 1
  3}\big)^{2\,\Inf(f)-1}\hatfs^2 \label{eq:bkkl1} \\
&\geq& 3\,\Inf(f)\cdot \big({\lfrac 1 9}\big)^{\Inf(f)}.\label{eq:bkkl2} 
\end{eqnarray} 
Here (\ref{eq:bkkl1}) uses the fact that $x\cdot 3^{-(x-1)}$ is a decreasing
function when $x\geq 1$, and (\ref{eq:bkkl2}) uses Markov's inequality
$\sum_{|S|\geq 2\,\Inf(f)} \hatfs^2 \leq \half$ along with the
assumption that $f$ is balanced. Now summing the right hand side of
(\ref{eq:bsse3}) gives us $\sumi\Inf_i(f)^{3/2} \leq \alpha\cdot
\Inf(f)$. Combining both inequalities yields $3\cdot
\big(\inv{9}\big)^{\Inf(f)}\leq \alpha^{1/2}$, and therefore 
$\Inf(f) = \Omega(\log(1/\alpha))$ as claimed.  \epf

\bcor
\label{cor:bkklcor}
Let $f\isafunc$ with $\E[f] = 0$. Then $\max_{i\in [n]}\set{\Inf_i(f)}
= \Omega\big(\frac{\log(n)}{n}\big)$. 
\ecor
 
This bound on the maximum influence is tight for the Ben-Or Linial
{\sf TRIBES} function \cite{BL89}: the $2^k$-way {\sf OR} of $k$-way
{\sf AND}'s of disjoint sets of variables (so $n = k\cdot 2^k$). We
remark that while Corollary \ref{cor:bkklcor} only gives a $\log(n)$
improvement over the $1/n$ bound that follows directly from the
Poincar\'e inequality, this factor makes a crucial difference in many
applications ({\it e.g.} it is the crux of Khot and Vishnoi's
\cite{KV05} counter-example to the Goemans-Linial conjecture
\cite{Goe97, Lin02}).
 
\bcor Let $f\isafunc$ be a balanced, monotone function viewed as a
voting scheme. Both candidates can bias the outcome of the election in
their favor to 99\% probability by bribing a $O\big(\inv{\log(n)}\big)$ fraction of
voters.  \ecor

\subsection{Dictator versus Quasirandom tests}

\bdefn[noisy influence] Let $f:\bn\to\R$ and $\rho\in [0,1]$. The
$i$-th $\rho$-noisy influence of $f$ is $\Inf_i^{(\rho)}(f) :=
\Stab_\rho(D_if) = \sum_{S\ni i}\rho^{|S|-1}\hatfs^2$. 
\edefn

Note that $\Inf_i^{(1)}(f) = \Inf_i(f)$ and $\Inf_i^{(0)}(f) =
\hatf(i)^2$, and intermediate values of $\rho\in (0,1)$ interpolates
between these two extremes -- the larger the value of $\rho$ is, the
more the weight $\hatfs^2$ on larger sets $S$ is dampened by the
attenuating factor $\rho^{|S|-1}$.

\bdefn[quasirandom] Let $f:\bn\to\R$ and $\eps,\delta\in [0,1]$.  We
say that $f$ is $(\eps,\delta)$-quasirandom, or $f$ has
$(\eps,\delta)$-small noisy influences, if $\Inf_i^{(1-\delta)}(f)
\leq \eps$ for all $i \in [n]$.  \edefn

A few prototypical quasirandom functions are the constants $\pm 1$
(these are $(0,0)$-quasirandom), the majority function
(($O(\inv{\sqrt{n}}),0$)-quasirandom), and large parities $\chi_S$
($(1-\delta)^{|S|-1},0)$-quasirandom).  Unbiased juntas, and dictators
in particular, are prototypical examples of functions far from
quasirandom.  The next proposition states that even functions far from
being quasirandom can only have a small number of variables with large
noisy influence:

\bprop
\label{prop:bquasijuntas}
Let $f:\bn\to\R$ with $\Var(f) \leq 1$, and let $J = \{i \in
  [n]\stc \Inf_i^{(1-\delta)}(f) \geq \eps\}$ be the set of coordinates
with large noisy influences. Then $|J| \leq 1/\eps\delta$. 
\eprop

\bpf 
We first note that 
\[ \eps\cdot |J| \leq \sum_{i\in J}\Inf_i^{(1-\delta)}(f) \leq \sumi
\Inf_i^{(1-\delta)}(f) = \sum_{|S|\geq 1}|S|\cdot
(1-\delta)^{|S|-1}\hatfs^2. \] It remains to check that $|S|\cdot
(1-\delta)^{|S|-1} \leq 1/\delta$ for any $S\sse [n]$: to see this
holds, note that $(1-\delta)^{|S|-1}\leq (1-\delta)^{i-1}$ for any
$i\leq |S|$, and so summing over $i$ from $1$ to $|S|$ gives us
$|S|\cdot (1-\delta)^{|S|-1} \leq \sum_{i=1}^{|S|}(1-\delta)^{i-1}
\leq \sum_{i=1}^\infty(1-\delta)^{i-1} = 1/\delta$. We have shown that
$\eps\cdot |J| \leq (1/\delta)\cdot \Var(f) \leq 1/\delta$, and the
proof is complete. \epf

Consider the problem of testing dictators: given blackbox access to a
boolean function $f$, if $f$ is a dictator the test accepts with
probability 1, and if $f$ is $\eps$-far from any of the $n$ dictators
it accepts with probability $1-\Omega(\eps)$.  Implicit in Kalai's
proof of Arrow's impossibility theorem (Theorem \ref{thm:bkalai}) is a
3-query test that comes close to achieving this:
\begin{figure}[h]
\bbox
\begin{center} 3-query $\NAE$ test\vspace{-5pt} 
\end{center}
\ben
\itemsep -.5pt
\item For each $i\in [n]$, pick $(x_i,y_i,z_i)$ uniformly from
  the 6 possible $\NAE$ triples. 
\item Query $f$ on $x$, $y$, and $z$. 
\item Accept iff $\NAE(f(x),f(y),f(z)) = 1$.  
\een 
\ebox
\vspace{-10pt}
\end{figure}
 
Recall that the $\NAE$ test accepts $f$ with probability $\frac{3}{4}
- \frac{3}{4}\cdot \Stab_{-1/3}(f)$, and so if $f = \mathsf{DICT}_i$
for some $i\in [n]$ (in particular, $\Stab_{-1/3}(f) = -1/3$) the test
accepts with probability 1 ({\it i.e.} we have perfect completeness).
However, the $\NAE$ test does not quite satisfy the soundness
criterion. We saw that if the test accepts $f$ with probability
$1-\eps$ then $\bfW^1(f) \geq 1-O(\eps)$, and by a theorem of
Friedgut, Kalai and Naor, $f$ has to $O(\eps)$-close to a dictator or
an anti-dictator; the soundness criterion requires $f$ to be
$O(\eps)$-close to a dictator.  It therefore remains to rule out
functions that are close to anti-dictators, and we will do this using
the Blum-Luby-Rubinfeld linearity test.  Recall that the $\BLR$ test
is a 3-query test that accepts $f$ with probability 1 if $f = \chis$
for some $S\sse [n]$, and with probability $1-\Omega(\eps)$ if $f$ is
$\eps$-far from all the parity functions.  Combining the $\BLR$ and
$\NAE$ tests, we have a 6-query test for dictatorship with perfect
completeness and soundness $1-\Omega(\eps)$.  In fact it is easy to
show that we can perform just one of the two tests, each with
probability $\half$, and reduce the query complexity to 3 while only
incurring a constant factor in the rejection probability.\medskip

As we will see on Saturday, for applications to hardness of
approximation ({\sf UGC}-hardness in particular) it
suffices to design a test that distinguishes dictators from
$(\eps,\eps)$-quasirandom functions, instead of one that distinguishes
dictators from functions $\eps$-far from dictators.

\bdefn[{\sf DICT} vs {\sf QRAND}] Let $0\leq s < c \leq 1$. A $(c,s)$
dictator versus quasirandom test is defined as follows.  Given
blackbox access to a function $f\isafunc$, 
\ben
\itemsep -.5pt
\item The test makes $O(1)$ non-adaptive queries to $f$.
\item If $f$ is a dictator, it accepts with probability at
least $c$. 
\item If $f$ is $(\eps,\eps)$-quasirandom, it accepts with
probability at most $s + o_\eps(1)$.  
\een \edefn

As we will see, often we will need to assume that $f$ is odd ({\it
  i.e.} $f(-x) = -f(x)$ for all $x\in\bn$, or equivalently, $\hatfs =
0$ for all even $|S|$).  Let us consider the $\NAE$ test as a dictator
versus quasirandom test, under the promise that $f$ is odd.  We have
seen that the test has perfect completeness ({\it i.e.} $c = 1$), and
now we determine the value of $s$.  First note that since $f$ is odd,
\begin{eqnarray*} \Pr[\NAE\ \accepts\ f] &=& {\lfrac 3 4} - {\lfrac 3 4}\cdot
  \big(\bfW^0(f) - {\lfrac 1 3}\bfW^1(f) + {\lfrac 1 9}\bfW^2(f) - {\lfrac
    1 {27}}\bfW^3(f) + \ldots \big)\\ 
&=& {\lfrac 3 4} - {\lfrac 3 4}\cdot
  \big(-{\lfrac 1 3}\bfW^1(f)  - {\lfrac
    1 {27}}\bfW^3(f) - {\lfrac 1 {243}\bfW^5(f)} - \ldots\big)\\ 
&=& {\lfrac 3 4} + {\lfrac 3 4}\cdot
  \big({\lfrac 1 3}\bfW^1(f)  + {\lfrac
    1 {27}}\bfW^3(f) + {\lfrac 1 {243}\bfW^5(f)} + \ldots \big)\\ 
&=& {\lfrac 3 4} + {\lfrac 3 4}\cdot \Stab_{1/3}(f). 
\end{eqnarray*} 
Now let $f$ be a $(\eps,\eps)$-quasirandom function.  Applying the
Majority Is Stablest theorem (we will need a statement of it for
functions with $\eps$-small $\eps$-noisy influences instead of
$\eps$-small regular influences), we have
\begin{eqnarray} \Pr[\NAE\ \accepts\ f] &=& {\lfrac 3 4} + {\lfrac 3 4}\cdot
\Stab_{1/3}(f)\nonumber \\ &\leq& {\lfrac 3 4} + {\lfrac 3 4}\cdot
\Stab_{1/3}(\MAJ) + o_\eps(1)  \nonumber \\
&\to& {\lfrac 3 4} + {\lfrac 3 4}\cdot
\big(1-{\lfrac 2 \pi}\arccos({\lfrac 1 3})\big) +
o_\eps(1)\label{eq:bnae1} \\
&=& 0.91226\ldots + o_\eps(1),\nonumber 
\end{eqnarray} 
where (\ref{eq:bnae1}) uses the estimate we proved for
$\Stab_\rho(\MAJ)$ using Sheppard's formula and the central limit
theorem in Section \ref{sec:sheppard}. Therefore we have shown that
the $\NAE$ test is a $(1,0.91226\ldots)$ dictator versus quasirandom
test.\medskip 

We consider two more examples of dictator versus quasirandom tests and
compute their $c$ and $s$ values: the $\rho$-noise test of S. Khot,
G. Kindler, E. Mossel and R. O'Donnell \cite{KKMO07}, and J. H\r{a}stad's ${\sf
  3XOR}_\delta$ test \cite{Has01}.

\begin{figure}[h]
\bbox
\begin{center} KKMO 2-query $\rho$-noise test
\vspace{-5pt} 
\end{center}
\ben
\itemsep -.5pt
\item Let $\rho\in [0,1]$.  Pick $x\in\bn$ uniformly, and $y\sim
  N_\rho(x)$. 
\item Query $f$ on $x$ and $y$. 
\item Accept iff $f(x) = f(y)$. 
\een 
\ebox
\vspace{-10pt}
\end{figure}

First note the $\rho$-noise test accepts $f = \mathsf{DICT}_i$ with
probability $\half + \half\rho$, the probability that $x_i$ is not
flipped in $y$. For soundness, let $f$ be an odd
$(\eps,\eps)$-quasirandom function and note that
\begin{eqnarray*}
  \Pr[{\sf KKMO}\ \accepts\ f] &=& \E\big[{\lfrac 1 2} + {\lfrac 1
    2}f(x)f(y)\big] \\ &=&  {\lfrac 1 2} + {\lfrac 1 2}\cdot \Stab_\rho(f) \\
  &\leq & {\lfrac 1 2} + {\lfrac 1 2} \cdot \Stab_\rho(\MAJ) + o_\eps(1)
  \\
  &\to& {\lfrac 1 2} + {\lfrac 1 2} \cdot \big(1-{\lfrac 2
    \pi}\arccos(\rho)\big)  + o_\eps(1),
\end{eqnarray*} 
where once again we have used the Majority Is Stablest theorem along
with Sheppard's formula (the assumption that $f$ is odd is used in the
application of the Majority Is Stablest theorem, which requires need
$\E[f] = 0$).  Different values of $\rho$ result in different $c$
versus $s$ ratios; for example, if $\rho = 1/\sqrt{2}$ then $c = 0.85$
and $s = 0.75$.  \medskip

For H\r{a}stad's test it will be convenient for us to view our
functions as $f:\F_2^n\to\bits$.  Like in the analyses of the $\NAE$
and $\rho$-noise tests, we will have to assume that $f$ is odd.
\begin{figure}[h]
\bbox
\begin{center} H\r{a}stad's 3-query $\mathsf{3XOR}_\delta$ test
\vspace{-5pt} 
\end{center}
\ben
\itemsep -.5pt
\item Let $\delta \in [0,1]$. Pick $x,y\in\F_2^n$ uniformly and
  independently, and set $z = x+y$. 
\item Pick $x'\sim N_{1-\delta}(x)$. 
\item Query $f$ on $x'$, $y$, and $z$. 
\item Accept iff $f(x')f(y)f(z) = 1$. 
\een 
\ebox
\vspace{-10pt}
\end{figure}

Note that this is identical to the $\BLR$ linearity test, except with
the noisy $x'$ instead of $x$. Once again it is easy to see that
dictators pass with probability $\half + \half(1-\delta) =
1-\frac{\delta}{2}$, and it remains to analyze soundness:
\begin{eqnarray*}
\Pr[\mathsf{3XOR}_\delta\ \accepts\ f] &=&
\Ex_{x,y,x'}\big[\lhalf + \lhalf f(x')f(y)f(z)\big] \\
&=& \lhalf + \lhalf \Ex_{x,y}\Big[\Ex_{x'}[f(x')]f(y)f(z)\Big] \\
&=& \lhalf + \lhalf \Ex_{x,y}\big[(T_{1-\delta}f)(x)f(y)f(z)\big] \\
&=& \lhalf + \lhalf
\Ex_{x}\Big[(T_{1-\delta}f)(x)\Ex_y[f(y)f(x+y)]\Big] \\
&=&  \lhalf + \lhalf \Ex_{x}\big[(T_{1-\delta}f)(x)\wh{f*f}(x)\big] \\
&=&  \lhalf + \lhalf \sumalls \wh{T_{1-\delta}f}(S)\wh{f*f}(S) \ = \
\lhalf + \lhalf \sumalls (1-\delta)^{|S|}\hatfs^3. 
\end{eqnarray*} 
Note that $\sumalls (1-\delta)^{|S|}\hatfs^3 \leq \max_{S\sse
  [n]}\{(1-\delta)^{|S|} |\hatfs|\}$ by Parseval's. Next we claim that
for all $\eps < \delta$, if $f$ is $(\eps,\eps)$-quasirandom then
$(1-\delta)^{|S|} |\hatfs| \leq \sqrt{\eps}$ for all $S$ with odd
cardinality (in particular, $S\neq\emptyset$). To see this, assume for
the sake of contradiction that there exist an $S$ of odd cardinality
for which this inequality does not hold. Then $\sqrt{\eps} <
(1-\delta)^{|S|} |\hatfs| \leq (1-\eps)^{|S|} |\hatfs|$, and
squaring both sides gives us $\eps < (1-\eps)^{2|S|} \hatfs^2 \leq
(1-\eps)^{|S|-1}\hatfs^2 \leq \Inf_i^{(1-\eps)}(f)$ for all $i\in
S$. Since $S\neq\emptyset$, this contradicts the assumption that $f$
is $(\eps,\eps)$-quasirandom. \medskip

Since the $\mathsf{3XOR}_\delta$ test accepts $(\eps,\eps)$-quasirandom
functions with probability at most $\half + \half \sqrt{\eps}$ ({\it
  i.e.} $s = \lhalf$), we have shown that it is a
$(1-\frac{\delta}{2}, \lhalf)$ dictator versus quasirandom test.

\pagebreak
 
\section{CSPs and hardness of approximation}

\begin{center}
\large{Thursday, 1st March 2012} 
\end{center}

\subsection{Constraint satisfaction problems} 

We begin by noting that function testers can be viewed more generally
as string testers: the tester is given blackbox access to a string $w
\in \bits^N$ ({\it i.e.} the truth-table of $f$, so $N = 2^n$), and if
$w$ satisfies some property $P_1\sse \bits^N$ ({\it e.g.}
dictatorship) the tester accepts with probability say at least $\frac{2}{3}$,
and if $w$ satisfies some other property $P_2\sse \bits^N$ ({\it e.g.}
quasirandomness, far from dictatorship, {\it etc.}) it rejects with
probability at least $\frac{2}{3}$.\medskip

We may view (non-adaptive) string testers simply as a list of
instructions. For example, 
\bit
\itemsep -.5pt
\item[] with probability $p_1$ query $w_1,w_5,w_{10}$ and accept iff
  $\phi_2(w_1,w_5,w_{10})$ 
\item[] with probability $p_2$ query $w_{17},w_4,w_3$ and accept iff
  $\phi_8(w_{17},w_4,w_3)$
\item[] with probability $p_3$ query $w_2,w_{12},w_7$ and accept iff
  $\phi_4(w_2,w_{12},w_7)$ 
\item[] \quad \quad \quad $\vdots $ \eit Here $\phi_1,\phi_2,\ldots$
  are predicates $\bits^k\to\set{{\sf T},{\sf F}}$. From this
  point-of-view, we see that a string tester naturally defines a
  weighted constraint satisfaction problem (CSP) over a domain of $N$
  boolean variables, with the predicates $\phi_i$'s as constraints and
  the associated $p_i$'s as weights.  The question of determining
  which string $w\in\bits^N$ passes the test with highest probability
  is then equivalent to the question of finding an optimal assignment
  that satisfies the largest weighted fraction of predicates. \medskip

  Under this correspondence, an explicit $(c,s)$ dictator versus
  quasirandom test for functions $f:\bits^\ell\to\bits$ defines an
  explicit instance of a weighted CSP over $L = 2^\ell$ boolean
  variables. Since all $\ell$ dictators pass with probability at least
  $c$, there are $\ell$ special assignments each of which satisfy at
  least a $c$ weighted fraction of constraints. Furthermore, since all
  quasirandom functions pass the test with probability at most $s +
  o(1)$, any CSP assignment which is, roughly speaking, ``very
  unlike'' the $\ell$ special assignments will satisfy at most a
  $s+o(1)$ fraction of constraints. In other words, any CSP assignment
  that satisfies at least an $s + \Omega(1)$ fraction of constraints
  must be at least ``slightly suggestive'' of at least one of the
  $\ell$ special assignments (we say that $f$ is suggestive of the
  $i$-th coordinate if $\Inf_i^{(1-\eps)}(f) >  \eps$, and in
  particular, an $(\eps,\eps)$-quasirandom function suggests none of
  its coordinates).\medskip

  On Saturday Per will prove the following theorem establishing a
  formal connection between dictator versus quasirandom tests, the
  {\sc Unique-Label-Cover} problem, and the hardness of approximating
  certain CSPs \cite{Kho02, KR03,
  KKMO07, Aus08}:

  \bthm Suppose there is an explicit $(c,s)$ dictator versus
  quasirandom test that uses predicates $\phi_1,\ldots,\phi_r$. For
  every $\eps > 0$ there exists a polynomial-time reduction where:

\begin{center} 
\label{thm:bulccsp}
\begin{tabular}{ccc} 
{\sc Unique-Label-Cover}  & $\longrightarrow$ & CSP with constraints
$\phi_1,\ldots,\phi_r$ \vspace{5pt}  \\
{\sf YES} instance & $\longrightarrow$ & there exists an assignment
that satisfies \\
& & a $(c-\eps)$ fraction of constraints  \vspace{5pt} \\
{\sf NO} instance  & $\longrightarrow$ & every assignment satisfies at
most \\ & & an $(s+\eps)$ fraction of constraints\\ 
\end{tabular} 
\end{center} 
\ethm

The Unique Games Conjecture \cite{Kho02} asserts that approximating
the {\sc Unique-Label-Cover} problem is {\sf NP}-hard. Theorem
\ref{thm:bulccsp} therefore says that assuming the UGC, for any
constant $\eps > 0$ an explicit $(c,s)$ dictator versus quasirandom
test implies the {\sf NP}-hardness of $((s/c)+\eps)$-factor
approximating CSPs with constraints corresponding to the predicates
used by the test.\medskip

Recall that our analyses of all three dictator versus quasirandom
tests we have seen so far depend on the promise that $f$ is odd (in
particular, note that the $(0,0)$-quasirandom constant function $f
\equiv 1$ pass both the KKMO $\rho$-noise test and H\r{a}stad's
$\mathsf{3XOR}_\delta$ test with probability 1). One way to elide this
assumption is to view them as testers for general functions
$g:\bits^{\ell-1}\to\bits$ (corresponding to half of the truth table
of an odd function $f:\bits^{\ell}\to\bits$, say for the inputs with
$x_i = 1$) where the respective predicates allow literals instead of
just variables.  Although the string $w = 1^N$ ({\it i.e.} $f \equiv
1$) trivially satisfies any CSP with constraints of the form $w_i =
w_j$ ({\it i.e.} the KKMO $\rho$-noise test) or $w_iw_jw_k = 1$ ({\it
  i.e.} the H\r{a}stad $\mathsf{3XOR}_\delta$ test), the same is no
longer true if literals are allowed in the constraints.  Given this,
we may apply Theorem \ref{thm:bulccsp} to the $\NAE$ test, KKMO
$\rho$-noise test, and H\r{a}stad's $\mathsf{3XOR}_\delta$ test to
conclude that under the UGC, \bit \itemsep -.5pt
\item Approximating {\sf MAX-3NAE-SAT} to a factor of $0.91226\ldots +
  \eps$ is {\sf NP}-hard. 
\item Approximating {\sf MAX-2LIN} to a factor of
  $(1-\inv{\pi}\arccos(\rho))/(\half + \half \rho) + \eps$ is {\sf NP}-hard.
\item Approximating {\sf MAX-3LIN} to a factor of $(\half + \eps)$ is
  {\sf NP}-hard. 
\eit

\subsection{Berry-Ess\'een} 
\label{sec:bbe}  
In this section we prove the Berry-Ess\'een theorem
\cite{Ber41,Ess42}, a finitary version of the central limit theorem
with explicit error bounds.  Actually we will give a proof that only
yields a polynomially weaker error bound, the upshot being that the
proof is relatively simple and can be easily generalized to other
settings (as we will see tomorrow, the Mossel-O'Donnell-Olezkiewicz
proof of the invariance principle, an extension of the Berry-Ess\'een
theorem to low-degree polynomials, is very similar in spirit). We will
need Taylor's theorem:

\blem[Taylor]
Let $\psi$ be a smooth function and $r\in \N$. For all $x\in \R$ and
$\eps > 0$ there exists a $y\in [x,x+\eps]$ such that 
\[ \psi(x+\eps) = \psi(x) + \eps  \psi^{(1)}(x) + {\lfrac 1 {2!}}\cdot
\eps^2\psi^{(2)}(x) + \ldots + {\lfrac 1 {(r-1)!}}\cdot {\eps^{r-1}} \psi^{(r-1)}(x) +
{\lfrac 1 {r!}}\cdot{\eps^r} \psi^{(r)}(y).\] 
In particular, $\psi(x+\eps) = \psi(x) + \eps\cdot \psi^{(1)}(x) +
\lhalf \cdot \eps^2 \psi^{(2)}(x) + {\lfrac 1 6}\cdot
\eps^3 \psi^{(3)}(x) + \error$, 
where the error term has magnitude at most
$\norm{\psi^{(4)}}_\infty\cdot\eps^4/24$.  
\elem 

\bprop[hybrid argument]
\label{prop:bbehybrid} Let $X_1,\ldots,X_n$ be independent random
variables satisfying $\E[X_i] = 0$. Let $\sigma_i^2 := \E[X_i^2]$ and
suppose $\sumi\sigma_i^2 = 1$.  Let $\bfX = \sumi X_i$, $\calg\sim
N(0,1)$ and $\psi:\R\to\R$. Then \[ |\E[\psi(\bfX)]- \E[\psi(\calg)]|
= O\Big(\norm{\psi^{(4)}}_\infty \lsum_{i=1}^n \E[X_i^4]\Big).\] Note that if
each $X_i$ is $B$-reasonable then $\sumi\E[X_i^4] \leq B\cdot
\sumi\sigma_i^4 \leq B\cdot \max\set{\sigma_i^2}$.  \eprop

\bpf We will view $\calg$ as the sum of independent Gaussians $\calg_1
+ \ldots + \calg_n$, where each $\calg_i\sim N(0,\sigma_i^2)$. The
proof proceeds by a hybrid argument, showing that only a small
error is introduced whenever each $X_i$ in $\bfX$ is replaced by the
corresponding Gaussian.  More precisely, for each $i = 0,\ldots,n$, we
define the random variable $\bfZ_i := \calg_1 + \ldots + \calg_i +
 X_{i+1} +\ldots +  X_n$; these $n+1$ random variables
interpolate between $\bfZ_0 = \bfX$ and $\bfZ_n = \calg$.  We will
prove the inequality
\[  |\E[\psi(\bfZ_{i-1})] - \E[\psi(\bfZ_i)]| = O(\norm{\psi^{(4)}}_\infty\cdot
  \E[X_i^4]).\] 
for all $i\in [n]$, noting that this implies the theorem by the
triangle inequality. Fix $i\in [n]$ and define the random variable
${\bf R} := \calg_1 + \ldots + \calg_{i-1} + X_{i+1}
+\ldots + X_n$, so $\bfZ_{i-1} = {\bf R} +  X_i$ and
$\bfZ_i = {\bf R} + \calg_i$.  Our goal is therefore to
bound $|\E[\psi({\bf R} + \sigma_i\cdot X_i)] - \E[\psi({\bf R} +
\sigma_i\cdot \calg_i)]|$.  Applying Taylor's theorem twice, we get 
\begin{eqnarray*}
|\E[\psi(\bfZ_{i-1})] - \E[\psi(\bfZ_i)]| &=& \Big|\E\big[\psi({\bf
  R}) + X_i\cdot \psi^{(1)}({\bf R}) +
\lhalf X_i^2\cdot  \psi^{(2)}({\bf R}) +
\lfrac{1}{6} X_i^3\cdot \psi^{(3)}({\bf R}) +
  \error_1\big] \\
&-&  \E\big[\psi({\bf
  R}) + \calg_i\cdot \psi^{(1)}({\bf R}) +
\lhalf \calg_i^2\cdot \psi^{(2)}({\bf R}) +
{\lfrac 1 6}\calg_i^3\cdot  \psi^{(3)}({\bf R}) +
  \error_2\big]\Big| \\
&=& |\E[\error_1 - \error_2]|.
\end{eqnarray*} 
Here we have used that fact that ${\bf R}$ is independent of $X_i$ and
$\calg_i$, along with the assumption that $X_i$ and $\calg_i$ have
matching first and second moments.  Substituting bounds on the error
terms $\error_1$ and $\error_2$ given by Taylor's theorem, we complete
the proof:
\[ 
|\E[\error_1 - \error_2]| \leq \E\bigg[\frac{\norm{\psi^{(4)}}_\infty\cdot X_i^4}{24} +
\frac{\norm{\psi^{(4)}}_\infty\cdot \calg_i^4}{24}\bigg] 
  =    O(\norm{\psi^{(4)}}_\infty\cdot \E[X_i^4]). 
\] 
\epf 

The same proof can be rewritten to show that if $\bfY = Y_1 + \ldots +
Y_n$ is the sum of independent random variables satisfying $\E[X_i] =
\E[Y_i]$, $\E[X_i^2] = \E[X_i^2]$, and $\E[X_i^3] = \E[X_i^3]$ (the
matching moments property), then $|\E[\psi(\bfX)-\psi(\bfY)]| \leq
\norm{\psi^{(4)}}_\infty\cdot \sumi \E[X_i^4]+\E[Y_i^4]$. \medskip

Let $\psi_t:\R\to\R$ be the threshold function that takes value 1 if
$x < t$, and 0 otherwise.  Of course the $4^{th}$ derivative of this
function is not uniformly bounded, but note that if it were the
Berry-Ess\'een theorem would follow as an immediate corollary of
Proposition \ref{prop:bbehybrid}.  Instead, we will use the fact that
$\psi_t$ is well-approximated by a function which does have a
uniformly bounded $4^{th}$ derivative, which then implies a slightly
weaker version of the Berry-Ess\'een theorem.

\blem[smooth approximators of thresholds]
\label{lem:bapproxthreshold} 
Let $t\in\R$ and $0< \lambda < 1$. There exists a function
$\psi_{t,\lambda}:\R\to\R$ with $\norm{\psi^{(4)}_{t,\lambda}}_\infty = O(1/\lambda^4)$ that
approximates $\psi_t$ in the following sense: \ben \itemsep -.5pt
\item $\psi_{t,\lambda}(x) = \psi_{t}(x) = 1$ if $x < t - \lambda$. 
\item $\psi_{t,\lambda}(x) \in [0,1]$ if $x\in [t-\lambda,t+\lambda]$. 
\item $\psi_{t,\lambda}(x) = \psi_t(x) = 0$ if $x > t + \lambda$. 
\een 
\elem

We are now ready to prove a weak version of the Berry-Ess\'een
theorem. 

\bprop[weak Berry-Ess\'een] Let $X_1,\ldots,X_n$ be independent,
$B$-reasonable random variables satisfying $\E[X_i] =0$. Let
$\sigma_i^2 := \E[X_i^2]$, $\tau := \max\set{\sigma_i^2}$, and suppose
$\sumi\sigma_i^2 =1$. Let $S = X_1 + \ldots + X_n$ and $\calg\sim
N(0,1)$. For all $t\in \R$,
\[  |\Pr[S\leq t] - \Pr[\calg \leq t]| \leq
  O((B\tau)^{1/5}).\] 
\eprop 
 
\bpf Since $\psi_{t+\lambda,\lambda}(x) = 1$ for all $x < t$ we have
$\Pr[S \leq t] \leq
\E[\psi_{t+\lambda,\lambda}(S)].$ Now using the
fact that $\norm{\psi^{(4)}_{t+\lambda,\lambda}}_\infty= O(1/\lambda^4)$, we
apply Proposition \ref{prop:bbehybrid} to get 
\[ 
\E[\psi_{t+\lambda,\lambda}(S)] =
\E[\psi_{t+\lambda,\lambda}(\calg)] \pm O(B\tau/\lambda^4)
\] 
Since $\psi_{t + \lambda,\lambda}(x)$ is at most 1
for all $x \leq t+2\lambda$ and 0 otherwise, we have 
\[ \E[\psi_{t+\lambda,\lambda}(\calg)] \leq \Pr[\calg < t + 2\lambda]
= \Pr[\calg < t ] + O(\lambda),
\]  
and so combining both error bounds gives us
$\E[\psi_{t+\lambda,\lambda}(S)] \leq \Pr[\calg < t
] +  O(B\tau/\lambda^4) + O(\lambda)$.  Arguing symmetrically for
$\psi_{t-\lambda,\lambda}(x)$ gives us $|\Pr[S < t]
- \Pr[\calg < t]| = O(B\tau/\lambda^4) + O(\lambda)$, and taking
$\lambda = (B\tau)^{1/5}$ yields the claim.  \epf
  
\pagebreak 

\section{Majority Is Stablest}
 
\begin{center}
\large{Friday, 2nd March 2012} 
\end{center}

Our definition of $\rho$-correlated Gaussians (Definition
\ref{def:bcorrgauss}) extend naturally to higher dimensions: let
$\vec{\calg}$ and $\vec{\calg'}$ be independent standard
$n$-dimensional Gaussians ({\it i.e.} $\vec{\calg} =
(\calg_1,\ldots,\calg_n)$ where each $\calg_i\sim N(0,1)$ is an
independent standard Gaussian, and similarly for $\vec{\calg'}$). Then
$\vec{\calg}$ and $\vec{\calh} := \rho\cdot \vec{\calg} +
\sqrt{1-\rho^2}\cdot \vec{\calg'}$ are $\rho$-correlated Gaussians.
Just like in the one-dimension case, we have $\E[\calg_i\calh_i] =
\rho$ for all $i\in [n]$.

\bdefn[Gaussian noise stability]
Let $f:\R^n\to\R$ and $\rho\in [-1,1]$. The Gaussian noise stability
of $f$ at noise rate $\rho$ is 
\[ \GStab_\rho(f) :=
\E[f(\vec{\calg})f(\vec{\calh})], \quad \text{where
  $\vec{\calg},\vec{\calh}$ are $\rho$-correlated Gaussians}.  \] 
\edefn

We begin by showing that $\GStab_\rho(f) = \Stab_\rho(f)$ for
multilinear polynomials $f:\R^n\to\R$.  To see this, let $f(x) =
\sumalls c_S\prod_{i\in S}x_i$ and note that
\[ 
 \E\Big[\Big(\lsum_{S\sse [n]}c_S\lprod_{i\in
  S}\vec{\calg_i}\Big)\Big(\lsum_{T\sse [n]}c_T\lprod_{i\in
  T}\vec{\calh_i}\Big)\Big] = \sum_{S,T\sse [n]}c_Sc_T \E\Big[
\lprod_{i\in S}\calg_i\lprod_{i\in T}\calh_i\Big] 
=  \sumalls c_S^2 \E\Big[ \lprod_{i\in S}\calg_i\calh_i \Big],
\] 
where we have used the independence of $\calg_i$ from
$\calg_j,\calh_j$ for $j\neq i$, along with the fact that $\E[\calh_i]
= \E[\calg_i] = 0$. Now again by independence and the fact that
$\E[\calg_i\calh_i] = \rho$ we conclude that $\GStab_\rho(f) =
\sumalls \rho^{|S|}c_S^2$, which agrees with the formula for $\Stab_\rho(f)$
we derived in Proposition \ref{prop:bnoise}.

\subsection{Borell's isoperimetric inequality}

\bthm[\cite{Bor85}] Let $f:\R^n\to\bits$ with $\E[f(\vec{G})] = 0$,
where $\vec{\calg}$ is a standard $n$-dimensional Gaussian. Let $0\leq
\rho\leq 1$.  Then $\GStab_\rho(f) \leq 1
-\frac{2}{\pi}\arccos(\rho)$.  \ethm

In this section we present Kindler and O'Donnell's recent simple proof
of (a special case of) Borell's theorem \cite{KO12}. We first
introduce a few definitions and give a geometric interpretation of the
theorem as an isoperimetric inequality in multidimensional Gaussian
space. 

\bdefn[rotation sensitivity]
Let $f:\R^n\to\bits$ and $\delta \in [0,\pi]$. The rotation
sensitivity of $f$ at $\delta$ is defined to be $\RS_f(\delta) :=
\Pr[f(\vec{\calg})\neq f(\vec{\calh})]$, where $\vec{\calg}$ and
$\vec{\calh}$ are $\cos(\delta)$-correlated Gaussians.  
\edefn

Recall that $\vec{\calg}$ and $\vec{\calh}$ are
$\cos(\delta)$-correlated if $\calh = \cos(\delta)\cdot \vec{\calg} +
\sin(\delta)\cdot \vec{\calg'}$ where $\vec{\calg'}$ is a standard
$n$-dimensional Gaussian independent of $\vec{\calg}$; we typically we
think of $\delta$ as small, so $\cos(\delta)\approx 1-\half\delta^2$
is close to 1 and $\sin(\delta)\approx \delta$ is a small quantity.
If we view $f:\R^n\to\bits$ as the indicator of a subset $\indi_f$ of
$\R^n$, the quantity $\RS_f(\delta)$ measures the probability that the
set $\indi_f$ separates a random Gaussian vector $\calg$ from a noisy
copy of it (roughly speaking, $\calg$ with $\sin(\delta)\cdot
\vec{\calg'}$ of noise added).  Therefore we may think of the rotation
sensitivity of $f$ as a measure of the boundary size of $\indi_f$.
Indeed, Kindler and O'Donnell show that $\lim\sup_{\delta\to 0^+}
\RS_f(\delta)/\delta$ is within a constant factor of the traditional
definition of the Guassian surface area of $\indi_f$ for
``sufficiently nice'' sets $\indi_f$.\medskip

Since $\RS_f(\delta) = \Pr[f(\vec{\calg})\neq f(\vec{\calh})] = \half
- \half\cdot \GStab_{\cos(\delta)}(f)$, Borell's theorem can be
equivalently stated as $\RS_f(\delta)\geq \frac{\delta}{\pi}$ for
functions $f$ with $\E[f(\vec{\calg})] = 0$; it gives a lower bound on
the boundary size of sets $\indi_f$ with Gaussian volume $\half$.
Sheppard's formula tells us that if $f = \sgn(a_1x_1 + \ldots +
a_nx_n)$ then $\Pr[f(\vec{\calg}) \neq f(\vec{\calh})] =
\inv{\pi} \arccos(\cos(\delta)) = \frac{\delta}{\pi}$, and so Borell's
inequality is tight when $\indi_f$ is any halfspace defined by a
hyperplane passing through the origin.\medskip

Kindler and O'Donnell prove Borell's theorem for $\delta =
\frac{\pi}{2\ell}$ where $\ell\in\N$ (we may assume $\ell \geq 2$
since the inequality is trivially true for uncorrelated Gaussians).
As a warm-up, we consider the case of $\delta = \frac{\pi}{4}$ ({\it
  i.e.} $\ell = 2$). Let $f:\R^n\to\bits$ with $\E[f(\vec{\calg})] =
0$ and note that our goal is to prove $\RS_f(\frac{\pi}{4}) \geq
{\lfrac 1 4}$. Let $\vec{\calg}$ and $\vec{\calg'}$ be independent
standard $n$-dimensional Gaussians and set $\vec{\calh} =
\inv{\sqrt{2}}\cdot\vec{\calg} + \inv{\sqrt{2}}\cdot
\vec{\calg'}$. Note that $(\vec{\calg},\vec{\calh})$ and
$(\vec{\calg'},\vec{\calh})$ are both $(\inv{\sqrt{2}})$-correlated
Gaussians, and so we have
\[ \Pr[f(\vec{\calg})\neq f(\vec{\calh})] + \Pr[f(\vec{\calh}) +
f(\vec{\calg'})] = 2\cdot \RS_f({\lfrac {\pi}{4}}). \] By a union
bound, the quantity on the left hand side is at most
$\Pr[f(\vec{\calg}) \neq f(\vec{\calg'})]$, and since $f$ is balanced
this probability is exactly $\inv{2}$ and the proof is complete.  We
remark that an identical proof can be carried out for bounded
functions $f:\R^n\to [-1,1]$, with $\RS_f(\delta) := \E[\half -\half
f(\vec{\calg})f(\vec{\calh})]$ as the generalized definition of
rotation sensitivity; all the steps remain the same, except that the
inequality $(\half-\half ac) \leq (\half-\half ab) + (\half -\half
bc)$ for all $a,b,c\in [-1,1]$ will be used in place of the union
bound.

\bthm[\cite{KO12}]
Let $f:\R^n\to\bits$ with $\E[f(\vec{\calg})] = 0$, where
$\vec{\calg}$ is a standard $n$-dimensional Gaussian. Let $\delta =
\frac{\pi}{2\ell}$ for some $\ell\in\N$, $\ell\geq 2$. Then
$\RS_f(\delta)\geq \inv{2\ell}$. 
\ethm

\bpf Let $\vec\calg$ and $\vec{\calg'}$ be independent standard
$n$-dimensional Gaussians. For $j = 0,\ldots,\ell$, we define the
hybrid random variable $\vec\calg^{(j)} := \cos(j\delta)\cdot\vec\calg
+ \sin(j\delta)\cdot \vec\calg'$, and note that they interpolate
between $\vec\calg^{(0)} = \vec\calg$ and $\vec\calg^{(\ell)} =
\vec\calg'$. Next, we claim that $\vec\calg^{(j-1)}$ and
$\vec\calg^{(j)}$ are $\cos(\delta)$-correlated for every $j\in
[\ell]$. Since the $n$ coordinates are independent, it suffices to
consider the first coordinates of $\vec\calg^{(j-1)}_1$ and
$\vec\calg^{(j)}_1$ and show that they are $\cos(\delta)$-correlated. We
write $\calg^{(j-1)}$ for ${\vec\calg^{(j-1)}}_1 =
\cos((j-1)\delta)\cdot \calg + \sin((j-1)\delta)\cdot\calg'$ and
$\calg^{(j)}$ for $\vec\calg_1^{(j)} = \cos(j\delta)\cdot\calg +
\sin(j\delta)\cdot\calg'$, and expand the expectation of their product
to check that
\begin{eqnarray*}
\E\big[\calg^{(j-1)}\calg^{(j)}\big] &=&
\cos((j-1)\delta)\cos(j\delta)\E[\calg\calg] \\ 
& & + \cos((j-1)\delta)\sin(j\delta)\E[\calg\calg'] \\
& & + \sin((j-1)\delta)\cos(j\delta)\E[\calg'\calg] \\
& & + \sin((j-1)\delta)\sin(j\delta)\E[\calg'\calg']\\  &=& 
\cos((j-1)\delta)\cos(j\delta) + \sin((j-1)\delta)\sin(j\delta) \\
&=& \cos((j-1)\delta - j\delta)\ =\ \cos(\delta). 
\end{eqnarray*} 
Here we have used the trigonometric identity $\cos(\theta)cos(\theta')
+ \sin(\theta)\sin(\theta') = \cos(\theta-\theta')$. Since
$\vec\calg^{(j-1)}$ and $\vec\calg^{(j)}$ are
$\cos(\delta)$-correlated, we apply the union bound to get 
$\lhalf = \Pr[f(\vec\calg)\neq f(\vec\calg')] \leq \sum_{i=1}^\ell
\Pr[f(\vec\calg^{(\ell-1)})\neq f(\vec\calg^{(\ell)})] = \ell\cdot
\RS_f(\delta)$, and the proof is complete.  \epf

\subsection{Proof outline of MIST} 
  
In this section we sketch the proof of the Majority Is Stablest
theorem (MIST):

\bthm[\cite{KKMO07,MOO10}] Let $f\isafunc$ and $\eps,\rho >
0$. Suppose $\E[f] = 0$ and $\Inf_i(f) \leq \eps$ for all $i\in
[n]$. Then $\Stab_\rho(f) \leq 1-\frac{2}{\pi}\arccos(\rho) +
o_\eps(1)$.  \ethm

{\bf Step 1.}  First consider $T_{1-\gamma}f$ for some small $\gamma > 0$.
  Note that \bit
\itemsep -.5pt
\item[(a)] $\Inf_i(T_{1-\gamma}f) \leq \Inf_i(f) \leq  \eps$ for all $i\in [n]$. 
\item[(b)]  $T_{1-\gamma}f$ is bounded
  since $T_{1-\gamma}$ is an averaging
  operator.
\item[(c)] $\Stab_\rho(f) =
  \Stab_{\rho'}(T_{1-\gamma}f)$ where $\rho' :=
  \frac{\rho}{(1-\gamma)^2}$. 
\eit

{\bf Step 2.} Next we truncate $T_{1-\gamma}f$ to get $g :=
(T_{1-\gamma}f)^{\leq k} = \sum_{|S|\leq
  k}(1-\gamma)^{|S|}\hatfs\chis$, where $k := \poly(\inv{\gamma})$.
Note that $g$ has low-degree but it may no longer
bounded. Nevertheless we can say that
  \[ \norm{(T_{1-\gamma}f)- g}_2^2 = \sum_{|S| > k} (1-\gamma)^{2\cdot
    |S|}\hatfs^2 \leq (1-\gamma)^{2k} \sum_{|S| > k}\hatfs^2 \leq
  \gamma, \] and since $g$ is bounded this implies
  $\E[\sqdist_{[-1,1]}(g(X_1,\ldots,X_n))] \leq \gamma$. For the same
  reason, $\Stab_{\rho'}(T_{1-\gamma}f) \leq \Stab_{\rho'}(g) +
  \gamma$.  Here $\sqdist_{[-1,1]}:\R\to\R^{\geq 0}$ is the function
  that gives the squared distance to the interval $[-1,1]$; {\it i.e.}
  $\sqdist_{[-1,1]}(t) = 0$ if $t\in [-1,1]$, and $(|t|-1)^2$
  otherwise. \bigskip

{\bf Step 3.}  We apply the invariance principle (an extension of
the Berry-Ess\'een theorem to low-degree multilinear polynomials,
proved in the next section) to $g$ and the test function
$\sqdist_{[-1,1]}$ to bound
\[ \E[\sqdist_{[-1,1]}(g(\vec\calg))] \leq \E[\sqdist_{[-1,1]}(g(\vec
X))] + \poly(2^k,\eps) = \gamma + \poly(2^k,\eps). \] We are omitting
a few details here since the invariance principle requires test
functions to have uniformly bounded $4^{th}$ derivatives, just like in
Berry-Ess\'een, so we actually need a smooth approximation of
$\sqdist_{[-1,1]}$.\bigskip 

{\bf Step 4.} Finally we consider $g':\R^n\to [-1,1]$, the truncation
of $g$ to the interval $[-1,1]$; {\it i.e.} $g'(u) = g(u)$ if $g(u)\in
[-1,1]$, and $\sgn(g(u))$ otherwise.  We have $\Stab_{\rho'}(g) =
\GStab_{\rho'}(g)$ since $g$ is multilinear, and $\GStab_{\rho'}(g')
\leq 1-\frac{2}{\pi}\arccos(\rho')$ by Borell's theorem (again we are
eliding some details here since $g'$ may not satisfy $\E[g'(\calg)] =
0$). It remains to argue that $\GStab_{\rho'}(g) \approx
\GStab_{\rho'}(g')$, and for this we need to define the
Ornstein-Uhlenbeck operator $U_\rho$, the Gaussian analogue of the
noise operator $T_\rho$: \bdefn[Ornstein-Uhlenbeck] Let $f:\R^n\to\R$
and $\rho\in [-1,1]$. The Ornstein-Uhlenbeck operator $U_\rho$ acts of
$f$ as follows: $(U_\rho f)(x) := \E[f(\rho\cdot x +
\sqrt{1-\rho^2}\cdot \vec\calg)]$, where $\vec\calg$ is a standard
$n$-dimensional Gaussian.  \edefn With this definition, we may express
Gaussian noise stability as $\GStab_\rho(f) =
\E[f(\vec\calg)U_\rho(\vec\calg)]$ and compute
\begin{eqnarray}
  |\GStab_{\rho'}(g)-\GStab_{\rho'}(g')| &=& |\E[g\cdot U_\rho
  g  - g'\cdot U_\rho g' ]| \nonumber \\
  &\leq& |\E[g\cdot U_\rho
  g  - g' \cdot U_\rho g ]| +
  |\E[g'\cdot U_\rho g -g'\cdot U_\rho
  g' ]| \nonumber \\
  &=& |\E[(g-g') \cdot U_\rho g ]| + 
 |\E[(g-g') \cdot U_\rho g' ]|\label{eq:bmist1} \\
&=& \big(\E[(g-g') ^2]\big)^{1/2}\big(\E[(U_\rho
g)^2]\big)^{1/2} \nonumber \\
& & + \ \big(\E[(g-g') ^2]\big)^{1/2}\big(\E[(U_\rho
g')^2]\big)^{1/2} \label{eq:bmist2} \\
&\leq& 2\,\big(\E[(g-g')^2]\big)^{1/2}.\label{eq:bmist3}
\end{eqnarray}
Here (\ref{eq:bmist1}) holds since $\E[g'\cdot U_\rho g] =\E[U_\rho g'
\cdot g]$, (\ref{eq:bmist2}) is an application of Cauchy-Schwarz, and
(\ref{eq:bmist3}) uses the fact that $U_\rho$ is a contraction on
$L^2$. Finally we note that $\E[(g-g')^2]$ is simply
$\E[\sqdist_{[-1,1]}(g)]$ and the proof is complete. 

\subsection{The invariance principle} 
 
In this section we prove (a special case of) the
Mossel-O'Donnell-Olezkiewicz invariance principle \cite{MOO10} for
multilinear polynomials with low influences and bounded degree; in
full generality the principle states that the distribution of such
polynomials is essentially invariant for all product spaces.  The crux
of the proof is a low-degree analogue of Proposition
\ref{prop:bbehybrid}; once again we proceed by a hybrid argument,
showing that a small error is introduced whenever we replace a
Rademacher random variable with a standard Gaussian. This is sometimes
known as the Lindeberg replacement trick, first appearing in
Lindeberg's proof of the central limit theorem \cite{Lin22}.  There
has been other work generalizing Lindeberg's argument to the
non-linear case \cite{Rot75, Rot79, Cha06}, but these results either
yield weaker error bounds or require stronger conditions ({\it e.g.}
worst-case influences rather than average-case).
 
\bprop[hybrid argument]
\label{prop:bmoothm318}
Let $Q$ be a degree-$d$ multilinear polynomial $Q(u) =
\sum_{|S|\leq d} c_S\prod_{i\in S}u_i$ and assume: \ben \itemsep -.5pt
\item The coefficients $c_S\in\R$ are normalized to satisfy
$\sum_{S\neq\emptyset}c_S^2 = 1$. 
\item We write $\tau_i$ to denote $\Inf_i(Q) = \sum_{S\ni i}c_S^2$,
  and let $\tau = \max_{i\in [n]}\Inf_i(Q)$. 
\item $\psi:\R\to\R$ is a function satisfying $|\psi^{(4)}(x)| \leq C$
  for all $x\in\R$.
\item $\bfX = Q(X_1,\ldots,X_n)$ and $\bfY =
  Q(\calg_1,\ldots,\calg_n)$ where $X_1,\ldots,X_n$ are independent
  Rademachers and $\calg_1,\ldots,\calg_n$ are independent standard
  Gaussians.  \een Then $|\E[\psi(\bfX)] - \E[\psi(\bfY)]| \leq d\cdot
  9^d\cdot C\cdot \tau$.  \eprop

  \bpf We first define a sequence of hybrid random variables that
  interpolate between $\bfX$ and $\bfY$. For each $i = 0,\ldots,n$ we
  define the random variable $\bfZ_i =
  Q(\calg_1,\ldots,\calg_i,X_{i+1},\ldots,X_n)$, and note that $\bfZ_0
  = \bfX$ and $\bfZ_n = \bfY$.  As before, it suffices to prove
\begin{equation} |\E[\psi(\bfZ_{i-1})]-\E[\psi(\bfZ_i)]| \leq C\cdot 9^d\cdot
\tau_i^2\label{eq:bmoo1}
\end{equation}
for all $i\in [n]$. Note that the overall
claim follows from the above by telescoping, the triangle
inequality, and the fact that
\[ \sumi\tau_i^2 \leq \tau \cdot \sumi\tau_i = \tau\cdot
\sumi\sum_{S\ni i} c_S^2 = \tau\cdot \sum_{|S|\leq d}|S|\cdot c_S^2
\leq \tau\cdot d. \] Here in the final inequality we have used our
assumption that the coefficients are normalized to satisfy
$\sum_{S\neq\emptyset}c_S^2 = \Var(\bfX) = \Var(\bfY) = 1$.  It
remains to prove (\ref{eq:bmoo1}).  Fix $i\in [n]$ and first
express $Q(u_1,\ldots,u_n)$ as the sum of two polynomials $R$ and $S$,
the former comprising all terms not containing $u_i$, and the latter
the rest with $u_i$ factored out.  That is,
\[ Q(u_1,\ldots,u_n) = R(u_1,\ldots,u_{i-1},u_{i+1},\ldots,u_n) + u_i
\cdot S(u_1,\ldots,u_{i-1},u_{i+1},\ldots,u_n),\] where $R$ has degree
at most $d$, and $S$ at most $d-1$ (note that if $d=1$ then $S$ is
simply the coefficient $\alpha_i$ of $u_i$ in the linear polynomial
$L$).  Next define the random variables
\begin{eqnarray*}
\bfR &=& R(\calg_1,\ldots,\calg_{i-1},X_{i+1},\ldots,X_n) \\
\bfS &=& S(\calg_1,\ldots,\calg_{i-1},X_{i+1},\ldots,X_n), 
\end{eqnarray*}  
and note that $\bfZ_{i-1} = \bfR + X_i\cdot \bfS$ and $\bfZ_i = \bfR +
\calg_i\cdot \bfS$.  We bound $|\E[\psi(\bfR + X_i\cdot\cals)] -
\E[\psi(\bfR + \calg_i\cdot\bfS)]|$ by considering their Taylor
expansions: \bigskip

\quad \quad $|\E[\psi(\bfZ_{i-1})] - \E[\psi(\bfZ_i)]|$
\[ =\  \Big|\E\big[\psi({\bf
  R}) + (X_i\cdot\bfS)\psi^{(1)}({\bf R}) +
{\lfrac 1 2} (X_i\cdot\bfS)^2 \psi^{(2)}({\bf R}) +
{\lfrac 1 6} (X_i\cdot\bfS)^3 \psi^{(3)}({\bf R}) +
  \error_1\big] \] \vspace{-18pt}
\[ \ \  - \  \E\big[\psi({\bf
  R}) + (\calg_i\cdot\bfS)\psi^{(1)}({\bf R}) +
{\lfrac 1 2} (\calg_i\cdot\bfS)^2 \psi^{(2)}({\bf R}) +
{\lfrac 1 6} (\calg_i\cdot\bfS)^3 \psi^{(3)}({\bf R}) +
  \error_2\big]\Big|. \]

  Note that the first four terms cancel out since $X_i$ and $\calg_i$
  are independent of $\bfS$ and $\bfR$, and the random variables $X_i$
  and $\calg_i$ have matching first, second and third
  moments. Applying the bounds on the error terms given by Taylor's
  theorem, we see that
\begin{eqnarray}
 |\E[\error_1-\error_2]| &\leq&
{\lfrac 1 {24}}\cdot C \E[X_i^4\cdot \bfS^4] +
{\lfrac 1 {24}}\cdot C \E[\calg_i^4\cdot\bfS^4] \label{eq:bmoo2} \\
&=& {\lfrac 1 {24}}\cdot {C} \E[\bfS^4] + {\lfrac 3 {24}}\cdot C \E[\bfS^4]
\label{eq:bmoo3} \\
&<& C\cdot \E[\bfS^4].  \nonumber 
\end{eqnarray} 
Here (\ref{eq:bmoo2}) uses our assumption that $\psi^{(4)}$ is
uniformly bounded by $C$, and (\ref{eq:bmoo3}) is by independence
along with the fact that $\E[\calg_i^4] = 3$.  Next, since $\bfS$ is a
degree-$d$ polynomial, we may apply Bonami's lemma (Theorem
\ref{thm:bbonami}) to get $C\cdot \E[\bfS^4] \leq C\cdot 9^d\cdot
\E[\bfS^2]^2$ and it remains to argue that $\E[\bfS^2]$ upper bounded
by $\tau_i = \Inf_i(Q) = \sum_{S\ni i}c_S^2$. Indeed, recall that $S$
is the polynomial comprising all terms of $Q$ containing $u_i$ with
$u_i$ factored out, and so
\[ \E[\bfS^2] = \E\Big[\big(\lsum_{S\ni i}c_S \lprod_{j\in S\backslash
  i} Y_j\big)^2\Big] = \sum_{S\ni i} c_S^2 = \tau_i, \] where each
$Y_j$ is either a Rademacher or standard Gaussian random variable,
depending on whether $j < i$.  We have shown that
$|\E[\psi(\bfZ_{i-1})] - \E[\psi(\bfZ_i)]| \leq C\cdot 9^d\cdot
\tau_i^2$, and the proof is complete.  \epf

In the proof of the Berry-Ess\'een theorem we needed the fact that the
anti-concentration of a standard Gaussian $\calg$ at radius $\eps$ is
$O(\eps)$.  That is, for all $t\in \R$ we have $\Pr[|\calg -t| < \eps]
= O(\eps)$.  The low-degree analogue of this small ball probability
is given by the following proposition due to Carbery and Wright
\cite{CW01, Kan11}.
 
\blem[Carbery-Wright]
There exists a universal constant $C$ such that the following
holds. Let $Q$ be a multilinear polynomial of degree $d$ over
$\calg_1,\ldots,\calg_n$, a sequence of independent standard
Gaussians, and $\eps > 0$. Then
\[ \Pr[|Q(\calg_1,\ldots,\calg_n)| \leq \eps] \leq C\cdot d\cdot
(\eps/\norm{Q(\calg)}_2)^{1/d}.\] In particular, if the coefficients
of $Q$ are normalized to satisfy $\Var(Q) = 1$ then for all $t\in\R$
and $\eps > 0$ we have $\Pr[|Q(\calg_1,\ldots,\calg_n)-t|\leq \eps] = O(d\cdot
\eps^{1/d})$. 
\elem

\blem[smooth test functions] Let $r\geq 2$ be an integer. There exists a
constant $B_r$ such that for all $0 < \lambda \leq 1/2$ and $t\in\R$
there exists a function $\Delta_{\lambda,t}:\R\to\R$ satisfying the
following:
\ben
\itemsep -.5pt
\item $\Delta_{\lambda,t}$ is smooth and $\norm{(\Delta_{\lambda,t})^{(r)}}_\infty \leq B_r\cdot
  \lambda^{-r}$. 
\item $\Delta_{\lambda,t}(x) = 1$ for all $x \leq t-2\lambda$. 
\item $\Delta_{\lambda,t}(x) \in [0,1]$ for all $x\in
  (t-2\lambda,t+2\lambda)$. 
\item $\Delta_{\lambda,t}(x) = 0$ for all $x\geq t+2\lambda$. 
\een 
 \elem

We are now ready to prove the invariance principle. 

\bthm[invariance] Let $Q(u_1,\ldots,u_n) = \sumalls c_S\prod_{i\in
  S}u_i$ be a degree-$d$ multilinear polynomial and assume
 \ben \itemsep -.5pt
\item The coefficients $c_S\in\R$ are normalized to satisfy
$\sum_{S\neq\emptyset}c_S^2 = 1$. 
\item We write $\tau_i$ to denote $\Inf_i(Q) = \sum_{S\ni i}c_S^2$,
  and let $\tau = \max_{i\in [n]}\Inf_i(Q)$. 
\item $\bfX = Q(X_1,\ldots,X_n)$ and $\bfY =
  Q(\calg_1,\ldots,\calg_n)$ where $X_1,\ldots,X_n$ are independent
  Rademachers and $\calg_1,\ldots,\calg_n$ are independent standard
  Gaussians. 
\een 
 Then for all $t\in \R$, $|\Pr[Q(X_1,\ldots,X_n) \leq t] -
\Pr[Q(\calg_1,\ldots,\calg_n)\leq t]| = O\big(d\cdot
(10^d\cdot\tau)^{1/(4d+1)}\big)$. 
\ethm

\bpf Let $t\in\R$.  We will write $Q(X)$ to denote
$Q(X_1,\ldots,X_n)$, $Q(\calg)$ for $Q(\calg_1,\ldots,\calg_n)$, and
$\psi$ for $\Delta_{\lambda,t+2\lambda}$ ($r = 4$), for some value of
$\lambda > 0$ to be determined later.  Recall that $\psi(x) = 1$ for
all $x \geq (t + 2\lambda)-2\lambda = t $ and so $\Pr[Q(X) \leq t]
\leq \E[\psi(Q(X))].$ Since $\psi^{(4)}$ is uniformly bounded by
$O(1/\lambda^4)$, we apply Proposition \ref{prop:bmoothm318} to get
that
\begin{eqnarray} \E[\psi(Q(X))] &\leq& \E[\psi(Q(\calg))] +
O(10^d\cdot\tau\cdot \lambda^{-4}).\nonumber \\
&\leq& \Pr[Q(\calg)\leq t + 4\lambda] + O(10^d\cdot \tau\cdot
\lambda^{-4}). \label{eq:moo4}\\
&=& \Pr[Q(\calg) \leq t] + \Pr[Q(\calg) \in (t,t+4\lambda)] +
O(10^d\cdot \tau\cdot \lambda^{-4}) \nonumber \\
&=& \Pr[Q(\calg) \leq t] + O(d\cdot (4\lambda)^{1/d}) +O(10^d\cdot
\tau\cdot \lambda^{-4}).\label{eq:moo5} 
\end{eqnarray} 
Here (\ref{eq:moo4}) is again by the properties of $\psi$, this time
using the fact that $\psi(x) = 0$ for all $x\geq (t+2\lambda) +
2\lambda$, and (\ref{eq:moo5}) is by Carbery-Wright. Choosing $\lambda
= (10^d\cdot \tau)^{d/(4d+1)}$, we have shown that 
\[ \E[\psi(Q(X))] \leq \Pr[Q(\calg)\leq t] + O\big(d\cdot
(10^d\cdot\tau)^{1/(4d+1)}\big).\] 
A symmetric argument establishes the analogous lower bound on
$\E[\psi(Q(X))]$, and this completes the proof. 
 \epf

\pagebreak 
 
\section{Testing dictators and {\sf UGC}-hardness}

\begin{center}
\large{Saturday, 3rd March 2012} \vspace{4pt} \\
\large{Guest lecture by Per Austrin}
\end{center}

\bdefn[unique label cover] Let $L$ be a positive integer. An instance
$\Psi$ of the $L$-{\sc Unique-Label-Cover} problem is a graph $G =
(V,E)$ where each edge $e\in E$ has an associated constraint that is a
permutation $\pi_e:[L]\to [L]$. A labelling of $\Psi$ is a assignment
to the vertices $\ell:V\to [L]$.  We say that an edge $(x,y)$ is
satisfied by $\ell$ if $\ell(x) = \pi_e(\ell(y))$, and the value of
$\ell$ is the fraction of edges satisfied by $\ell$.  The optimum of
$\Psi$, denoted $\opt(\Psi)$, is the maximum value of the optimal
assignment $\ell$.  \edefn

The {\sc Unique-Label-Cover} problem is a special case of the {\sc
  Label-Cover} problem where the constraints $\pi_e:[L]\to [L]$ are
not required to be permutations.  In particular, in the {\sc
  Unique-Label-Cover} problem assigning a label to a vertex
necessarily determines the labels of all its neighbors, whereas this
is not the case for the {\sc Label-Cover} problem.  Consequently, for
{\sc Unique-Label-Cover} the task of deciding whether there is an
assignment that satisfies all the edges ({\it i.e.} distinguishing
$\opt(\Psi) = 1$ versus $\opt(\Psi) < 1$) is easy: assume a label for
a vertex $v$ and deduce the labels for the remaining vertices in a
breadth-first fashion. If there is a conflict at some vertex we choose
another label for $v$ and repeat the same process. If no consistent
labeling can be found after iterating through all $L$ possible labels
for $v$ then $\opt(\Psi) < 1$; otherwise $\opt(\Psi) = 1$.  This is in
sharp contrast to the situation for {\sc Label-Cover}: it is known
that for every $\eps > 0$ there is an $L$ such that it is {\sf
  NP}-hard to distinguish between $\opt(\Psi) = 1$ versus $\opt(\Psi)
< \eps$ where $\Psi$ is an instance of $L$-{\sc Label-Cover}; we
sometimes refer to this as the $(1,\eps)$-hardness of {\sc
  Label-Cover}. \medskip

The Unique Games Conjecture of S. Khot \cite{Kho02} asserts that the
{\sc Unique-Label-Cover} problem is nevertheless very hard to
approximate as soon as we move to almost-satisfiable instances.

\begin{conjecture}[unique games]
For every $\eps > 0$ there exists an $L$ such that the it is {\sf
  NP}-hard to distinguish between $\opt(\Psi) \geq 1-\eps$ versus
$\opt(\Psi) < \eps$, where $\Psi$ is an instance of $L$-{\sc
  Unique-Label-Cover}.  Equivalently, for every $\eps > 0$ there exists an
$L$ such that the $L$-{\sc Unique-Label-Cover} problem is
$(1-\eps,\eps)$-hard. 
\end{conjecture}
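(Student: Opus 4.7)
The statement to prove is the Unique Games Conjecture, which is a notorious open problem, so what follows is an attack plan rather than a realistic proof outline. The most natural starting point is to reduce from the general \textsc{Label-Cover} problem, which is known (from the PCP theorem combined with parallel repetition) to be $(1,\eps)$-hard for every $\eps>0$. Given an instance $\Phi$ of $R$-\textsc{Label-Cover} on a bipartite graph with projection constraints $\pi_e\colon[R]\to[L]$, the goal is to construct an instance $\Psi$ of $L'$-\textsc{Unique-Label-Cover} (with $L'$ possibly much larger than $R$) such that $\opt(\Phi)=1$ implies $\opt(\Psi)\geq 1-\eps$ while $\opt(\Phi)<\eps$ implies $\opt(\Psi)<\eps$.

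The plan would be to introduce a Long-Code style gadget at each vertex: replace each $v\in V(\Phi)$ by a block of boolean variables encoding a function $f_v\colon\bits^{L}\to\bits$ on the potential labels of $v$, and encode each projection constraint $\pi_e$ as a collection of permutation-type consistency tests between adjacent blocks. Completeness would be witnessed by setting each $f_v$ to be the dictator corresponding to a satisfying labelling of $\Phi$, using the same completeness calculation as in the dictator-versus-quasirandom tests from earlier in the notes. For soundness, the plan would be to invoke the dictator-versus-quasirandom paradigm: if $\opt(\Psi)\geq\eps$, then a non-negligible fraction of the local tests are passed by functions $f_v$ that are neither dictators nor $(\eps,\eps)$-quasirandom, and by Proposition~\ref{prop:bquasijuntas} each such $f_v$ has only $O(1/\eps^2)$ coordinates of large noisy influence; sampling one such coordinate uniformly as the label of $v$ should give back a randomised labelling of $\Phi$ of value $\Omega(\poly(\eps))$, contradicting $\opt(\Phi)<\eps$.

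The main obstacle, and the reason no such plan has succeeded to date, is precisely the soundness step. Converting general projection constraints into permutation constraints destroys the rigidity that makes \textsc{Label-Cover} hard in the first place: a permutation constraint is locally satisfiable from either side, so an adversary enjoys exponentially more freedom per edge, and the naive influence-decoding step above need not produce a consistent \textsc{Label-Cover} labelling --- the decoded label sets on two endpoints of an edge may be completely incompatible. Overcoming this would require either a novel gadget that re-injects enough rigidity (for instance via additional expander-based consistency checks, perhaps exploiting the small-set expansion properties of the noisy hypercube highlighted earlier), or a genuinely new route, such as routing through the $2$-to-$2$ Games framework or a direct NP-hardness of small-set expansion. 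I would therefore frame the crux of the plan as the following concrete open sub-problem: design a permutation-constraint gadget for which influence decoding, applied to a $\Psi$-labelling of value $\eps$, provably produces a $\Phi$-labelling whose endpoint label sets agree along all but an $\eps$-fraction of edges. Absent such a gadget, this line of attack stalls, and I do not see how to push it through with only the techniques developed in these notes.
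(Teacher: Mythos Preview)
The statement is a \emph{conjecture}, not a theorem: the paper does not prove it, and indeed explicitly presents it as Khot's Unique Games Conjecture, an open problem whose truth is assumed as a hypothesis in the subsequent hardness reductions. Your proposal correctly recognises this and is upfront that what follows is an attack plan rather than a proof; that is the right call, and your diagnosis of where the na\"ive Label-Cover-to-Unique-Label-Cover reduction breaks (the soundness step, where permutation constraints lose the rigidity of projection constraints and influence decoding need not produce consistent labels across edges) is accurate and well articulated. There is nothing to compare against in the paper, since the paper simply states the conjecture and moves on.
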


Recent work of S. Arora, B. Barak and D. Steurer \cite{ABS10} gives an
algorithm for {\sc Unique-Label-Cover} running in time
$\exp(n^{\poly(\eps)})$. \medskip

Today we will prove the following theorem showing how explicit $(c,s)$
dictator versus quasirandom tests yield Unique Games-based hardness
results for certain constraint satisfaction problems \cite{Kho02, KR03,
  KKMO07, Aus08}:

\bthm Suppose we have a $(c,s)$ dictator versus quasirandom using
predicates from a set $T$. For every $L$ there exist a polynomial time
reduction $R$ from $L$-{\sc Unique-Label-Cover} to $\text{\sf
  MAX-CSP}(T)$ such that for every instance $\Psi$ of $L$-{\sc
  Unique-Label-Cover} and every $\eps > 0$, there exists a $\delta >
0$ satisfying \ben \itemsep -.5pt
\item (Completeness) If $\opt(\Psi) \geq 1-\delta$ then $\opt(R(\Psi)) \geq c-\eps$.
\item (Soundness) If $\opt(\Psi) < \delta$ then $\opt(R(\Psi)) < s + \eps$. 
\een 
\ethm 

First, a small catch: the dictator versus quasirandom test have to
work not only for boolean functions but also for bounded
functions $f:\bn\to [-1,1]$.  We may view any predicate
$\phi:\bits^k\to\zbits$ as $\phi^* : [-1,1]^k \to [0,1]$, where
$\phi^*(y_1,\ldots,y_k) :=
\E[\phi(\boldsymbol{x_1},\ldots,\boldsymbol{x_k})]$, the expectation
taken with respect to $\bits$-valued random variables
$\boldsymbol{x_i}$ satisfying $\E[\boldsymbol{x_i}] = y_i$.  It is
easy to check that $\phi^*(x) = \phi(x)$ for all $x\in\bits^k$, and in
fact we have $\phi^*(y_1,\ldots,y_k) = \sum_{S\sse
  [k]}\hat{\phi}(S)\prod_{i\in S}y_i$.  With this observation any
tester for boolean functions using predicate $\phi$ can be
extended to one for all bounded functions: instead of
accepting iff $\phi(f(x_1),\ldots,f(x_k)) = 1$, the tester accepts
with probability $\phi^*(f(x_1),\ldots,f(x_k))\in [0,1]$. 

\subsection*{The reduction from {\sc Unique-Label-Cover} to {\sf
    MAX-CSP}}

With this caveat out of the way, we are now ready to describe the
reduction $R$:  

\begin{center}
\begin{boxedminipage}{5.4in}
\begin{minipage}{5.4in}
\vspace{8pt}
\begin{quote}
  Let $\Psi$ be an instance of $L$-{\sc Unique-Label-Cover} defined
  over a graph $G = (V,E)$.  Suppose we have a $(c,s)$ dictator versus
  quasirandom test for functions $\bits^L\to [-1,1]$ using $k$-ary
  predicates from a set $T$.  Consider the following instance
  $R(\Psi)$ of $\textsf{MAX-CSP}(T)$:\medskip

{\bf Variables.} There will be $|V|\cdot 2^L$ variables: for each
$u\in V$ we define $2^L$ boolean variables $\{Z_{u,x} \stc
x\in\bits^L\}$.    \medskip

{\bf Constraints.} A random constraint will be sampled as follows: 
\vspace{-3pt}
\ben
\item Pick $u\in V$ uniformly. 
\item Pick $k$ neighbors $v_1,\ldots,v_k\in N(u)$ of
  $u$ uniformly independently. 
\item Define  $\widetilde{f_{u,v_i}} := f_{v_i}(x\circ
  \pi_{u,v_i})$.  
\item Pick $x_1,\ldots,x_k \in\bits^L$ according to the distribution
  over $k$-tuples induced by the tester, and set $y_i :=
  \widetilde{f_{u,v_i}}(x_i)$.
\item Return the constraint $\phi(y_1,\ldots,y_k) = 1$.  \een
\end{quote}
\vspace{.1pt}
\end{minipage}
\end{boxedminipage}
\end{center}

In step 3, $\pi_{u,v_i}$ is the permutation associated with the edge
$(u,v_i)\in E$, and $x\circ \pi_{u,v_i}$ is the string $x$ with its
coordinates permuted according to $\pi_{u,v_i}$.  For each $u\in V$,
it will be convenient for us to think of an assignment to the
corresponding $2^L$ variables of the CSP as a boolean function
$f_u:\bits^{L}\to\bits$, where $Z_{u,x} \leftarrow f_u(x)$; an
assignment to all $|V|\cdot 2^L$ variables can then be defined as a
set of $|V|$ boolean functions $\{ f_u:\bits^L\to\bits\}_{u\in V}$.  We
will assume that $G$ is regular; this is without loss of generality by
a result of Khot and Regev \cite{KR03}.

\subsection*{Completeness} 

Suppose $\opt(\Psi) \geq 1-\delta$, and let $\ell : V \to [L]$ be a
labelling achieving this.  Our goal is to exhibit an assignment to the
variables of the CSP $R(\Psi)$ that satisfies at least a $c-\eps$
fraction of constraints.  We consider the fraction satisfied by the
assignment $f_u(x) := \mathsf{DICT}_{\ell(u)}(x) = x_{\ell(u)}$ for
all $u\in V$.\medskip

Consider an edge $(u,v_i)\in E$ satisfied by $\ell$ ({\it i.e.}
$\ell(u) = \pi_{u,v_i}(\ell(v_i))$, and note that 
\[ \widetilde{f_{u,v_i}}(x) = f_{v_i}(x\circ \pi_{u,v_i}) = (x\circ
\pi_{u,v_i})_{\ell(v_i)} = x_{\pi_{u,v_i}(\ell(v_i))} = x_{\ell(u)} =
\mathsf{DICT}_{\ell(u)}.\] Therefore if the $k$ edges incident to $u$
(chosen in step 2 above) are all satisfied by $\ell$ then
$\E[\phi(\boldsymbol{y_1},\ldots,\boldsymbol{y_k})]$ is the probability
that the test accepts $\mathsf{DICT}_{\ell(u)}$, at least $c$ by our
assumption. Since $G$ is regular and $\ell$ satisfies a $1-\delta$
fraction of all edges, the probability that $k$ uniformly random edges
incident to a random $u\in V$ is satisfied by $\ell$ is at least
$1-k\delta$, and so we have $\opt(R(\Psi)) \geq c \cdot (1-k\delta)
\geq c -\eps$ (for sufficiently small $\delta$).

\subsection*{Soundness} 

We will assume that $\opt(R(\Psi)) \geq s + \eps$ and prove
$\opt(\Psi) = \Omega_\eps(1)$.  We first express the fraction of
satisfied constraints as
\[
s+ \eps \leq
\mathop{\Ex_{u,v_1,\ldots,v_k}}_{x_1,\ldots,x_k}\big[\phi\big(\widetilde{f_{u,v_1}}(x_1),\ldots,
\widetilde{f_{u,v_k}}(x_j)\big)\big] =
\Ex_{u,x_1,\ldots,x_k}\big[\phi\big(\Ex_{v_1}[\widetilde{f_{u,v_1}}(x_1)],\ldots,
\Ex_{v_k}[\widetilde{f_{u,v_k}}(x_k)]\big)\big] \] For each $u\in V$,
let $g_u\bits^L\to [-1,1]$ be the function defined by $g_u(x) :=
\E_{v\in N(u)}[\widetilde{f_{u,v}}(x)]$, and so the above can be
rewritten as $\E_{u,x_1,\ldots,x_n}[\phi(g_u(x_1),\ldots,g_u(x_k))]
\geq s + \eps$. By an averaging argument, at least an $\frac{\eps}{2}$
fraction of all $u\in V$ satisfy
$\E_{x_1,\ldots,x_k}[\phi(g_u(x_1),\ldots,g_u(x_k))]\geq s +
\frac{\eps}{2}$; call these $u\in V$ ``good''. By the soundness
condition of a dictator versus quasirandom test, it follows that if
$u$ is good $g_u$ cannot be $(\gamma,\gamma)$-quasirandom for some
$\gamma = \Omega_\eps(1)$; in particular, there must exist at least
one $i\in [L]$ such that $\Inf_i^{(1-\gamma)}(g_u)\geq \gamma$.
\medskip

Let $J_u = \{i\in [L] \stc \Inf_i^{(1-\gamma)}(g_u) \geq \gamma\}$,
and note that $|J_u| \leq \inv{\gamma^2}$ by Proposition
\ref{prop:bquasijuntas}.  We claim that every $i\in J_u$ satisfies
$\Pr_{v\in N(u)}\big[\Inf_{\pi^{-1}_{u,v}(i)}^{(1-\gamma)}(f_v) >
\frac{\gamma}{2}\big] > \frac{\gamma}{2}$ (once again, at least one
such $i$ exists if $u$ is good). To see this, it suffices to check
that
\[ \Ex_{v\in N(u)}\big[\Inf_{\pi_{u,v}(i)}^{(1-\gamma)}(f_v)\big] =
\Ex_{v}\big[\Inf_i^{(1-\gamma)}\big(\widetilde{f_{u,v}}\big)\big] \geq
\Inf_i^{(1-\gamma)}\big(\Ex_v\big[\widetilde{f_{u,v}}\big]\big) =
\Inf_i^{(1-\gamma)}(g_u) \geq \gamma; \] the claim then follows by
Markov's inequality.  For every $u\in V$ we also define $J_u' = \{j\in
[L] \stc \Inf_j^{(1-\gamma)}(f_u)\geq \frac{\gamma}{2}\}$, noting that
$|J_u'| \leq \frac{2}{\gamma^2}$. \medskip

Consider the following labelling $\ell: V \to [L]$: for each $u\in V$,
if $J_u\cup J_u'$ is non-empty we assign $u$ a uniformly random label
in $J_u\cup J_u'$, otherwise we assign $u$ an arbitrary label.  It
remains to prove that $\ell$ satisfies an $\Omega_\eps(1)$
fraction of edges.  We have shown that for at least an
$\frac{\eps}{2}\cdot\frac{\gamma}{2} = \frac{\eps\gamma}{4}$ fraction
of edges $(u,v)$ there exists an $i\in [L]$ such that $i\in J_u\cup
J_u'$ and $\pi_{u,v}^{-1}(i) \in J_v\cup J_v'$.  Conditioned on the
existence of such an $i$, the edge is satisfied if $\ell(u) = i$ and
$\ell(v) = \pi_{u,v}^{-1}(i)$ (recall that $(u,v)$ is satisfied if
$\ell(u) = \pi_{u,v}(\ell(v))$), and this happens with probability at
least $(|J_u\cup J_u'||J_v\cup J_v'|)^{-1} \geq
\big(\inv{\gamma^2}+\frac{2}{\gamma^2}\big)^2 = \frac{\gamma^4}{9}$. We
conclude that $\opt(\Psi) = \Omega(\eps\gamma\cdot \gamma^4) =
\Omega(\eps\gamma^5) = \Omega_\eps(1)$, and the proof is complete.

\bigskip

 For further details see \cite{Aus08}. 

\pagebreak

\end{document}